\def\BibTeX{{\rm B\kern-.05em{\sc i\kern-.025em b}\kern-.08em
    T\kern-.1667em\lower.7ex\hbox{E}\kern-.125emX}}
\newcolumntype{C}{>{\centering\arraybackslash}X} 
\newcolumntype{b}{>{\hsize=2.3\hsize}X}
\theoremstyle{plain}
\newtheorem{theorem}{Theorem}
\newtheorem*{theorem*}{Theorem}
\newtheorem*{corollary*}{Corollary}
\newtheorem{lemma}{Lemma}
\newtheorem{corollary}{Corollary}
\newtheorem{proposition}{Proposition}
\theoremstyle{definition}
\newtheorem{definition}{Definition}
\theoremstyle{remark}
\newtheorem{remark}{Remark}
\newtheorem{example}{Example}
\newtheorem{counterexample}{Counterexample}
\DeclareMathOperator*{\argmax}{arg\,max}
\DeclareMathOperator{\sign}{sign}
\newcommand{\X}{\mathcal{X}}
\newcommand{\Y}{\mathcal{Y}}
\newcommand{\Pm}{\mathcal{P}}
\newcommand{\Q}{\mathcal{Q}}
\newcommand{\F}{\mathcal{F}}
\newcounter{labelcnt}
\renewcommand{\thelabelcnt}{(\alph{labelcnt})}
\newcommand{\setlabel}[1]{%
  \refstepcounter{labelcnt}\ltx@label{lbl:#1}%
  {\text{\upshape\thelabelcnt}}%
}
\DeclareMathOperator*{\esssup}{ess\,sup}
\DeclareMathOperator*{\essinf}{ess\,inf}
\newcommand{\ml}[2]{\mathcal{L}\left(#1  \!\!  \to  \!\!   #2\right)} 
\begin{document}

\title{On Sibson's $\alpha$-Mutual Information}

\author{
\IEEEauthorblockN{Amedeo Roberto Esposito, Adrien Vandenbroucque, Michael Gastpar} 
\IEEEauthorblockA{\textit{School of Computer and Communication Sciences} \\
       EPFL, Lausanne, Switzerland\\
\{amedeo.esposito, michael.gastpar\}@epfl.ch, adrien.vandenbroucque@alumni.epfl.ch}
}

\maketitle

\begin{abstract} 
   We explore a family of information measures that stems from R\'enyi's $\alpha$-Divergences with $\alpha<0$. In particular, we extend the definition of Sibson's $\alpha$-Mutual Information to negative values of $\alpha$ and show several properties of these objects. Moreover, we highlight how this family of information measures is related to functional inequalities that can be employed in a variety of fields, including lower-bounds on the Risk in Bayesian Estimation Procedures.
\end{abstract}

\begin{IEEEkeywords}
R\'enyi-Divergence, Sibson's Mutual Information, Information Measures, Bayesian Risk, Estimation
\end{IEEEkeywords}

\section{Introduction}
Sibson's $\alpha$-Mutual Information is a generalization of Shannon's Mutual Information with several applications in probability, information and learning theory \cite{fullVersionGeneralization}. In particular, it has been used to provide concentration inequalities in settings where the random variables are \textbf{not} independent, with applications to learning theory \cite{fullVersionGeneralization}. The measure is also connected to Gallager's exponent function, a central object in the channel coding problem both for rates below and above capacity \cite{gallager1,gallager2}. Moreover, a new operational meaning has been given to the measure with $\alpha\!=\!+\infty$ when a novel measure of information leakage has been proposed in \cite{leakageLong}, under the name of Maximal leakage. 
Similarly to $I_\alpha$, Maximal Leakage has recently found applications in learning and probability theory \cite{fullVersionGeneralization}. In this work we will look at some properties of $I_\alpha$ with a different perspective and we will extend the definition in order to include negative values of $\alpha$ as well. The reason for this extension is strongly operationally driven and will be tied to a family of functional-analytic bounds that we can provide for Sibson's $\alpha$ MI for every $\alpha\in\mathbb{R}$.
\section{Background and definitions}
Throughout the paper we will often use the notion of $L^p$-norms: let $1\leq p \leq \infty$ and consider the measurable space $(\Omega, \mathcal{F}, \mu)$, let $f$ be a measurable function with respect to the space, then
\begin{equation}
    \left\lVert f \right\rVert_{L^p(\mu)} = \left(\int |f|^pd\mu\right)^\frac1p.
\end{equation}
The definition can be extended to values of $p<1$, however in those cases one does not have norms anymore e.g., if $0<p<1$ one recovers a quasi-norm but not a norm. 
\subsection{Sibson's $\alpha$-Mutual Information}
Introduced by R\'enyi as a generalization of entropy and KL-divergence, $\alpha$-divergence has found many applications ranging from hypothesis testing to guessing and several other statistical inference and coding problems~\cite{verduAlpha}. Indeed, it has several useful operational interpretations (e.g., hypothesis testing, and the cut-off rate in block coding \cite{RenyiKLDiv,opMeanRDiv1}). It can be defined as follows~\cite{RenyiKLDiv}:
\begin{definition}
	Let $(\Omega,\F,\Pm),(\Omega,\F,\Q)$ be two probability spaces. Let $\alpha>0$ be a positive real number different from $1$. Consider a measure $\mu$ such that $\Pm\ll\mu$ and $\Q\ll\mu$ (such a measure always exists, e.g. $\mu=(\Pm+\Q)/2$) and denote with $p,q$ the densities of $\Pm,\Q$ with respect to $\mu$. The $\alpha$-Divergence of $\Pm$ from $\Q$ is defined as follows:
	\begin{align}
	D_\alpha(\Pm\|\Q)=\frac{1}{\alpha-1} \log \int p^\alpha q^{1-\alpha} d\mu.
	\end{align}
\end{definition}
\begin{remark}
    The definition is independent of the chosen measure $\mu$. 
    It is indeed possible to show that
    $\int p^{\alpha}q^{1-\alpha} d\mu = \int \left(\frac{q}{p}\right)^{1-\alpha}d\Pm $, and that whenever $\Pm\ll\Q$ or $0<\alpha<1,$ we have $\int p^{\alpha}q^{1-\alpha} d\mu= \int \left(\frac{p}{q}\right)^{\alpha}d\Q$, see \cite{RenyiKLDiv}.
\end{remark}

It can be shown that if $\alpha>1$ and $\Pm\not\ll\Q$ then $D_\alpha(\Pm\|\Q)=\infty$. The behaviour of the measure for $\alpha\in\{0,1,\infty\}$ can be defined by continuity. In general, one has that $D_1(\Pm\|\Q) = D(\Pm\|\Q)$ but if $D(\Pm\|\Q)=\infty$ or there exists $\beta$ such that $D_\beta(\Pm\|\Q)<\infty$ then $\lim_{\alpha\downarrow1}D_\alpha(\Pm\|Q)=D(\Pm\|\Q)$\cite[Theorem 5]{RenyiKLDiv}. For an extensive treatment of $\alpha$-divergences and their properties we refer the reader to~\cite{RenyiKLDiv}. 
Starting from R\'enyi's Divergence and the geometric averaging that it involves, Sibson built the notion of Information Radius \cite{infoRadius}:
\begin{definition}\label{SibsonsInfoRadius}
    Let $(\mu_1,\ldots,\mu_n)$ be a family of probability measures and $(w_1,\ldots, w_n)$ be a set of weights s.t. $w_i\geq 0$ for $i=1,\ldots,n$ and such that $\sum_{i=1}^n w_i>0$. Let $\alpha\geq1$, the information radius of order $\alpha$ is defined as:
    \begin{align*}
        \frac{1}{\alpha-1}\min_{\nu\ll\sum_iw_i\mu_i}\log\left(\sum_i w_i\exp((\alpha-1)D_\alpha(\mu_i\|\nu))\right)  \label{infoRadius}.
    \end{align*}
\end{definition}

Suppose now we have two random variables $X,Y$ jointly distributed according to $\Pm_{XY}$. It is possible to generalise Def. \ref{SibsonsInfoRadius} and see that the information radius is a special case of the following quantity \cite{verduAlpha}:
\begin{equation}
    I_\alpha(X,Y) = \min_{\Q_Y} D_\alpha(\Pm_{XY}\|\Pm_{X}\Q_Y). \label{sibsIAlpha} 
\end{equation}
$I_\alpha(X,Y)$ represents a generalisation of Shannon's Mutual Information and possesses many interesting properties \cite{verduAlpha}. Indeed, $\lim_{\alpha\to 1}I_\alpha(X,Y)=I(X;Y)$. 
On the other hand when $\alpha\to\infty$, we get: $I_\infty(X,Y)=\log\mathbb{E}_{\Pm_Y}\left[\sup_{x:\Pm_X(x)>0} \frac{\Pm_{XY}(\{x,Y\})}{\Pm_X(\{x\})\Pm_Y(\{Y\})}\right] =\ml{X}{Y}$, where $\ml{X}{Y}$ denotes the Maximal Leakage from $X$ to $Y$, a recently defined information measure with an operational meaning in the context of privacy and security \cite{leakageLong}.
For more details on Sibson's $\alpha$-MI, as well as a closed-form expression, we refer the reader to~\cite{verduAlpha}, as for Maximal Leakage the reader is referred to~\cite{leakageLong}.
\subsection{R\'enyi's $\alpha$-Divergence - Negative Orders}
Not much is known or has been explored on R\'enyi's $\alpha$-Divergence with $\alpha<0$. According to R\'enyi himself \cite{renyiMeasuresInfo} only positive orders can be regarded as information measures. One of the reasons behind this statement is probably the fact that, taking an axiomatic approach to measures of information like the one undertaken in \cite{renyiMeasuresInfo}, $D_\alpha$ with $\alpha<0$ satisfies many of those properties with the wrong sign of inequality. In fact\cite{RenyiKLDiv}:
\begin{proposition}
Let $\alpha<0$ and $\mu,\nu$ be two probability measures such that $\nu\ll\mu$ then:
\begin{enumerate}
    \item $D_\alpha(\nu\|\mu) \leq 0$ (as opposed to \textbf{$\geq 0$});
    \item $D_\alpha(\nu\|\mu)$ is \textbf{concave} in $\nu$ (as opposed to \textbf{convex} in $\mu$);
    \item $D_\alpha$ is \textbf{upper} semi-continuous in the pair $(\nu,\mu)$ in the topology of set-wise convergence (as opposed to \textbf{lower} semi-continuous) ;
    \item Let $K$ be a Markov Kernel, then $D_\alpha(K\nu\|K\mu) \geq D_\alpha(\nu\|\mu)$ (as opposed to $D_\alpha(K\nu\|K\mu)\leq  D_\alpha(\nu\|\mu)$);
\end{enumerate}
\end{proposition}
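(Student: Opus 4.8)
The plan is to deduce all four items from the corresponding classical properties of \emph{positive}-order Rényi divergence by means of the skew-symmetry identity
\[
  D_\alpha(\nu\|\mu)\;=\;\frac{\alpha}{1-\alpha}\,D_{1-\alpha}(\mu\|\nu),
\]
valid for every real $\alpha<0$. The first step is to establish this identity. Writing $p,q$ for the densities of $\nu,\mu$ with respect to a common dominating measure $\lambda$, a direct computation gives $\int p^\alpha q^{1-\alpha}\,d\lambda=\int q^{1-\alpha}p^{\,1-(1-\alpha)}\,d\lambda=\exp\!\big(-\alpha\,D_{1-\alpha}(\mu\|\nu)\big)$, and dividing the exponent by $\alpha-1$ yields the formula; this is essentially the reflection formula recorded in~\cite{RenyiKLDiv}. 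The point to keep in mind is that $\alpha<0$ forces $1-\alpha>1$, so $D_{1-\alpha}$ is a genuine positive-order divergence (in particular non-negative, possibly $+\infty$) and the multiplicative constant $\tfrac{\alpha}{1-\alpha}$ is strictly negative; the hypothesis $\nu\ll\mu$ makes $\{q=0,\,p>0\}$ $\lambda$-negligible, while on $\{q>0,\,p=0\}$ (nonempty exactly when $\mu\not\ll\nu$) both sides evaluate to $-\infty$ under the usual conventions, so the identity holds without further restrictions.

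Granting the identity, each item is immediate, the ``wrong'' sign being exactly what multiplication by a negative constant (and, for item~(3), a harmless permutation of the two arguments) produces. Item~(1) follows from non-negativity of $D_{1-\alpha}(\mu\|\nu)$. Item~(2) follows from convexity of $D_\beta(P\|Q)$ in its second argument for $\beta>1$ (a Hölder-inequality argument, see~\cite{RenyiKLDiv}): $D_\alpha(\nu\|\mu)$ is then a strictly negative multiple of a convex function of $\nu$, hence concave in $\nu$. Item~(3) follows from lower semi-continuity of $D_\beta$ in the pair of arguments in the topology of setwise convergence: a negative multiple of a lower semi-continuous function is upper semi-continuous, and joint semi-continuity is insensitive to swapping the arguments. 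Item~(4) follows from the data-processing inequality $D_{1-\alpha}(K\mu\|K\nu)\le D_{1-\alpha}(\mu\|\nu)$: multiplying by $\tfrac{\alpha}{1-\alpha}<0$ reverses the inequality, and applying the skew-symmetry identity (which, being an identity, requires no separate absolute-continuity check for $K\nu,K\mu$) to both sides gives $D_\alpha(K\nu\|K\mu)\ge D_\alpha(\nu\|\mu)$.

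The main obstacle is bookkeeping rather than mathematics: one must verify that the skew-symmetry identity, and therefore each reduction, remains valid at the boundary — when one or both sides are infinite and when $\nu,\mu$ fail to be mutually absolutely continuous — which is why the explicit case analysis on $\{p=0\}$ and $\{q=0\}$ in the first step is worth carrying out carefully. A secondary subtlety worth flagging is that item~(2) only asserts concavity in the first argument: joint concavity genuinely fails for $\alpha<0$, mirroring the fact that for $\beta>1$ the divergence $D_\beta$ is convex in its second argument but not jointly convex.
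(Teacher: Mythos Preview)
The paper does not prove this Proposition; it merely quotes it from~\cite{RenyiKLDiv} and then records the skew-symmetry identity (as \cite[Lemma~10]{RenyiKLDiv}) immediately afterward as a separate fact. Your derivation of all four items from the corresponding positive-order properties via $D_\alpha(\nu\|\mu)=\tfrac{\alpha}{1-\alpha}D_{1-\alpha}(\mu\|\nu)$ is correct and is exactly the mechanism underlying the cited result, so there is nothing substantive to contrast.

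One point you actually handle more carefully than the paper does: the paper explicitly restricts the skew-symmetry lemma to mutually equivalent measures $\nu\equiv\mu$, warning that ``otherwise one might incur in settings where one divergence is finite and the other is infinite.'' Your case analysis on $\{p=0,\,q>0\}$ shows that, under the Proposition's standing hypothesis $\nu\ll\mu$ alone, both sides of the identity equal $-\infty$ whenever $\mu\not\ll\nu$, which is precisely what is needed to make the reductions for items~(1)--(4) go through without imposing an extra equivalence assumption. The convexity input you invoke for item~(2), namely that $D_\beta(P\|Q)$ is convex in $Q$ for $\beta>1$, is indeed a log-convexity/H\"older fact and gives concavity of $D_\alpha(\nu\|\mu)$ in $\nu$ after multiplication by the negative constant $\tfrac{\alpha}{1-\alpha}$.
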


However, some of the properties are maintained, like the following:
\begin{theorem*}[\!\!{\cite[Thm 39]{RenyiKLDiv}}]
Let $\alpha\in[-\infty,+\infty]$ and let $\nu,\mu$ be two probability measures such that $\nu\ll\mu$, then $D_\alpha(\nu\|\mu)$ is non-decreasing in $\alpha$.
\end{theorem*}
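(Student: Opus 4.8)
The plan is to reduce the statement to the classical monotonicity of power (Hölder) means of a nonnegative random variable. Since $\nu\ll\mu$, we may take $\mu$ itself as the common dominating measure in the definition of $D_\alpha$, so that the density of $\mu$ is identically $1$; writing $Z:=\frac{d\nu}{d\mu}$, the identity $\int p^\alpha q^{1-\alpha}\,d\mu=\int \left(\frac{q}{p}\right)^{1-\alpha}d\Pm$ recorded in the Remark becomes, for every $\alpha\notin\{1,-\infty,+\infty\}$,
\begin{align}
D_\alpha(\nu\|\mu)&=\frac{1}{\alpha-1}\log\int Z^{\alpha}\,d\mu=\frac{1}{\alpha-1}\log\E_\nu\!\left[Z^{\alpha-1}\right]\\
&=\log\Bigl(\E_\nu\!\left[Z^{\alpha-1}\right]\Bigr)^{\frac{1}{\alpha-1}}=:\log M_{\alpha-1}(Z),
\end{align}
where $d\nu=Z\,d\mu$ gives the second equality and $M_t(Z):=(\E_\nu[Z^t])^{1/t}$ is the power mean of order $t$ of $Z$ under the probability measure $\nu$. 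Note that $Z>0$ holds $\nu$-a.s., since $\nu(\{Z=0\})=\int_{\{Z=0\}}Z\,d\mu=0$, so each $Z^{t}$ is well defined $\nu$-a.s. with value in $(0,+\infty]$.

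I would then invoke — or, for completeness, reprove by Jensen's inequality applied to $x\mapsto x^{r}$ (with $r=t/s$) and to $x\mapsto\log x$ — the classical fact that $t\mapsto M_t(Z)$ is non-decreasing on all of $\mathbb{R}$, under the usual conventions $M_0(Z)=\exp(\E_\nu[\log Z])$, $M_{+\infty}(Z)=\esssup Z$, and $M_{-\infty}(Z)=\essinf Z$. Since $\alpha\mapsto\alpha-1$ and $\log$ are both increasing, this is exactly the claim for $\alpha\in\mathbb{R}\setminus\{1\}$. The three exceptional orders are then matched to the boundary members of the power-mean family: $D_1(\nu\|\mu)=D(\nu\|\mu)=\E_\nu[\log Z]=\log M_0(Z)$, while $D_{+\infty}(\nu\|\mu)=\log\esssup Z=\log M_{+\infty}(Z)$ and $D_{-\infty}(\nu\|\mu)=\log\essinf Z=\log M_{-\infty}(Z)$ (the essential suprema/infima taken with respect to $\mu$ and with respect to $\nu$ coincide on $\{Z>0\}$, which carries full $\nu$-mass); since $\sup_t M_t(Z)=M_{+\infty}(Z)$ and $\inf_t M_t(Z)=M_{-\infty}(Z)$, these extreme values sit correctly at the two ends, closing the chain of inequalities.

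The one delicate point — which I would treat carefully rather than wave through — is the bookkeeping of $+\infty$ and the crossing of $t=0$: for $t<0$ one may have $\E_\nu[Z^t]=+\infty$, in which case $M_t(Z)$ is read as $0$ and $D_\alpha(\nu\|\mu)=-\infty$, and one must check that the power-mean inequalities survive this convention. They do, because each Jensen step produces an inequality in the direction needed once one remembers that $x\mapsto x^{1/t}$ reverses order for $t<0$. An alternative, purely analytic route avoids these conventions altogether: set $\psi(\alpha):=(\alpha-1)D_\alpha(\nu\|\mu)=\log\E_\mu[e^{\alpha\log Z}]$, which is convex in $\alpha$ (a cumulant-generating function, by Hölder) and satisfies $\psi(1)=0$; then the numerator $g(\alpha):=\psi'(\alpha)(\alpha-1)-\psi(\alpha)$ of $\frac{d}{d\alpha}\frac{\psi(\alpha)}{\alpha-1}$ obeys $g(1)=0$ and $g'(\alpha)=\psi''(\alpha)(\alpha-1)$, hence $g\ge 0$ throughout the interval where $\psi$ is finite, so $\frac{d}{d\alpha}D_\alpha(\nu\|\mu)\ge 0$ there, with the boundary orders $\alpha\in\{1,\pm\infty\}$ again dispatched by continuity. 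I would present the power-mean argument as the main line and relegate the convexity computation to a remark.
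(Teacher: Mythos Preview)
The paper does not prove this theorem; it is quoted from \cite[Thm~39]{RenyiKLDiv} as background, with no accompanying argument, so there is no in-paper proof to compare against. Your power-mean reduction is the standard route and is essentially how the cited reference itself treats positive orders, and the convexity alternative you sketch is also well known.

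One step, however, needs more care than you give it. The change of measure $\int Z^\alpha\,d\mu=\E_\nu[Z^{\alpha-1}]$, which you lift from the Remark following Definition~1, is valid for $\alpha>0$ (the integrand vanishes on $\{Z=0\}$), but for $\alpha<0$ it can fail when $\mu\not\ll\nu$: the $\mu$-integral picks up $+\infty$ on $\{Z=0\}$ while the $\nu$-expectation does not see that set at all. Concretely, take $\mu$ uniform on $\{0,1\}$ and $\nu=\delta_1$, so $Z(0)=0$, $Z(1)=2$; for $\alpha=-1$ your formula gives $\log M_{-2}(Z)=\log 2$, whereas the definition yields $D_{-1}(\nu\|\mu)=\frac{1}{-2}\log\int Z^{-1}\,d\mu=-\infty$. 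The Remark you invoke sits under a definition that explicitly restricts to $\alpha>0$, so it does not license the move for negative orders. The repair is a one-line case split: if $\mu(\{Z=0\})>0$ then $D_\alpha(\nu\|\mu)=-\infty$ for every $\alpha<0$, so monotonicity on $(-\infty,0)$ is vacuous and the junction at $\alpha=0$ is automatic; when $\mu\ll\nu$ (equivalently $Z>0$ $\mu$-a.s.), your argument runs as written over the whole line. Your convexity route has the same blind spot---then $\psi\equiv+\infty$ on $(-\infty,0)$ and the derivative computation is unavailable there---and the same patch cures it.
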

And to conclude, let us present a result connecting negative and positive orders:
\begin{lemma}[\!\!{{\cite[Lemma 10]{RenyiKLDiv}}} - Skew Symmetry]
For every $\alpha\in(-\infty,+\infty)\setminus\{0,1\}$, let let $\nu,\mu$ be two probability measures such that $\nu\equiv\mu$
\begin{equation}
    D_\alpha(\nu\|\mu) = \frac{\alpha}{1-\alpha}D_{(1-\alpha)}(\mu\|\nu). 
\end{equation}
\end{lemma}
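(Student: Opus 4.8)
The plan is to reduce both sides of the identity to one and the same integral. Fix a $\sigma$-finite measure $\lambda$ dominating both $\nu$ and $\mu$ (for instance $\lambda=(\nu+\mu)/2$), and write $p=d\nu/d\lambda$ and $q=d\mu/d\lambda$. Since $\nu\equiv\mu$, both $p>0$ and $q>0$ hold $\lambda$-almost everywhere on the common support, so the expressions $p^\alpha q^{1-\alpha}$ and $q^{1-\alpha}p^{\,1-(1-\alpha)}=q^{1-\alpha}p^{\alpha}$ are well defined and finite-valued $\lambda$-a.e.\ for \emph{every} real $\alpha$, with the convention that they vanish off the common support. This is exactly the point at which the hypothesis $\nu\equiv\mu$ (rather than mere $\nu\ll\mu$) is used: it prevents the degenerate situation in which $p$ or $q$ vanishes on a set of positive measure while being raised to a negative power, which would leave the integral undefined for $\alpha<0$ or $\alpha>1$.

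The crux is then the trivial observation that the integrand is invariant under the substitution $(\nu,\alpha)\mapsto(\mu,1-\alpha)$, namely
\begin{equation}
    \int p^\alpha q^{1-\alpha}\,d\lambda \;=\; \int q^{1-\alpha}p^{\,1-(1-\alpha)}\,d\lambda .
\end{equation}
By the definition of $\alpha$-divergence the left-hand side equals $\exp\!\big((\alpha-1)\,D_\alpha(\nu\|\mu)\big)$, while the right-hand side equals $\exp\!\big(((1-\alpha)-1)\,D_{1-\alpha}(\mu\|\nu)\big)=\exp\!\big(-\alpha\,D_{1-\alpha}(\mu\|\nu)\big)$. Taking logarithms gives $(\alpha-1)\,D_\alpha(\nu\|\mu)=-\alpha\,D_{1-\alpha}(\mu\|\nu)$, and dividing by $\alpha-1\neq0$ (legitimate since $\alpha\neq1$) yields $D_\alpha(\nu\|\mu)=\frac{\alpha}{1-\alpha}D_{1-\alpha}(\mu\|\nu)$. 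The exclusion of $\alpha\in\{0,1\}$ is precisely what keeps all three prefactors $\frac{1}{\alpha-1}$, $\frac{1}{(1-\alpha)-1}$, and $\frac{\alpha}{1-\alpha}$ finite and nonzero, so the manipulation is valid throughout.

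The only step requiring a little care — the one I would flag as the \emph{main obstacle}, though it is quite minor — is the bookkeeping when the common integral equals $0$ or $+\infty$, so that one of the two divergences is $\pm\infty$. One simply verifies that the identity persists under the usual extended-real conventions: e.g.\ if $\int p^\alpha q^{1-\alpha}\,d\lambda=+\infty$ and $\alpha>1$, then $D_\alpha(\nu\|\mu)=+\infty$, while $D_{1-\alpha}(\mu\|\nu)=-\infty$ (a negative order, consistent with the Proposition's nonpositivity), and since $\frac{\alpha}{1-\alpha}<0$ the right-hand side is $+\infty$ as well; the remaining sign patterns (integral $0$, or $\alpha<0$, or $0<\alpha<1$) are checked in the same way. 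Alternatively, one can avoid introducing $\lambda$ altogether and argue directly with the Radon–Nikodym chain rule $d\mu/d\nu=(d\nu/d\mu)^{-1}$, which is valid because $\nu\equiv\mu$, to pass between $\int(d\nu/d\mu)^\alpha\,d\mu$ and $\int(d\mu/d\nu)^{1-\alpha}\,d\nu$; this gives the same conclusion with essentially the same case analysis.
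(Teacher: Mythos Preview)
The paper does not supply its own proof of this lemma; it is simply quoted from \cite[Lemma~10]{RenyiKLDiv}. Your argument is correct and is essentially the standard one (and the one given in the cited reference): both $D_\alpha(\nu\|\mu)$ and $D_{1-\alpha}(\mu\|\nu)$ are built from the \emph{same} Hellinger integral $\int p^\alpha q^{1-\alpha}\,d\lambda$, so equating the two expressions and rearranging the prefactors gives the identity, with the equivalence $\nu\equiv\mu$ ensuring the integrand is well defined for all real $\alpha$.
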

Such a relationship, although very useful, is restricted to cases in which the two measures are equivalent with respect to each other, \textit{i.e.}, absolute continuity holds in both directions: $\nu\ll\mu$ \textbf{and} $\mu\ll\nu$ otherwise one might incur in settings where one divergence is finite and the other is infinite. However, $D_\alpha$ with $\alpha<0$ is not concave in the second argument. 
\begin{counterexample}
Consider a discrete setting and point mass functions. Let $\mu_1=(0.32,0.68),\mu_2=(0.5,0.5),\nu=(0.13,0.87)$ and $\lambda=0.4$. One has that with $\alpha=-2$, $D_\alpha(\nu\|\mu_1)=-0.2855$, $D_\alpha(\nu\|\mu_2)=-0.6744$ and, moreover, $-0.5287=D_\alpha(\nu\|\lambda\mu_1+(1-\lambda)\mu_2)< \lambda D_\alpha(\nu\|\mu_1)+(1-\lambda) D_\alpha(\nu\|\mu_2) =-0.5188.$
\end{counterexample}
\section{Definition}
The starting point will, once again, be \eqref{sibsIAlpha}. From now on we will consider $\alpha$ to be always strictly negative unless specified otherwise and we will restrict ourselves to discrete probability measures for simplicity. 
We thus bring forth the following definition:
\begin{definition}
Let $X,Y$ be two random variables whose joint measure is $\Pm_{XY}$ and the corresponding marginals are given by $\Pm_X$ and $\Pm_Y$. Let $\alpha\in(-\infty,0)$
\begin{equation}
    I_\alpha(X,Y) = - \max_{\Q_Y} D_\alpha(\Pm_{XY}\|\Pm_X\Q_Y). \label{definition}
\end{equation}

\end{definition}
\begin{remark}
The minus in the definition is included in order to make sure that all the properties of an information measures are satisfied (non-negativity, DPI, etc.). A similar approach can be undertaken in order to define $D_\alpha$ and $H_\alpha$ with negative $\alpha$ by simply changing the multiplicative constant from $\frac{1}{\alpha-1}$ to $\frac{1}{1-\alpha}$ for R\'enyi's divergence and from $\frac{1}{1-\alpha}$ to $\frac{1}{\alpha-1}$ for R\'enyi's entropy for negative orders.
\end{remark}
This quantity has, much like Sibson's $\alpha$-MI with $\alpha>0$, a closed-form expression given by the following result.
\begin{theorem}
Let $X,Y$ be two random variables whose joint measure is $\Pm_{XY}$ and the corresponding marginals are given by $\Pm_X$ and $\Pm_Y$. Let $\alpha<0$
\begin{equation}
    I_\alpha(X,Y) = -\frac{\alpha}{
    \alpha-1}\log\mathbb{E}_{\Pm_Y}\left[\mathbb{E}_{\Pm_X}^{\frac1\alpha}\left[\left(\frac{d\Pm_{XY}}{d\Pm_X\Pm_Y}\right)^\alpha\right]\right].
\end{equation}
\end{theorem}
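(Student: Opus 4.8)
The plan is to evaluate the maximization in \eqref{definition} in closed form, mirroring the classical computation of Sibson's $I_\alpha$ for $\alpha>1$ while carefully tracking the sign reversals caused by $\alpha<0$. Write $p(x,y),p(x),p(y),q(y)$ for the masses of $\Pm_{XY},\Pm_X,\Pm_Y,\Q_Y$. Expanding $D_\alpha$ with respect to the base measure $\mu=\Pm_X\Q_Y$ and summing out the $x$-coordinate gives
\[
D_\alpha(\Pm_{XY}\|\Pm_X\Q_Y)=\frac{1}{\alpha-1}\log\sum_y q(y)^{1-\alpha}c(y),\qquad c(y):=\sum_x p(x,y)^{\alpha}p(x)^{1-\alpha}.
\]
Since $\alpha<0$, the prefactor $1/(\alpha-1)$ is negative, so $\max_{\Q_Y}D_\alpha(\Pm_{XY}\|\Pm_X\Q_Y)$ is attained exactly where $F(\Q_Y):=\sum_y q(y)^{1-\alpha}c(y)$ is \emph{minimized} over probability mass functions $q$.

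First I would minimize $F$ over $q$ supported on a fixed set $S$ on which $c$ is finite (and positive, which is automatic on $\operatorname{supp}(\Pm_Y)$). There $F$ is convex, since $1-\alpha>1$ makes $t\mapsto t^{1-\alpha}$ convex, so the stationarity condition $(1-\alpha)q(y)^{-\alpha}c(y)=\lambda$ identifies the unique minimizer $q^\ast(y)\propto c(y)^{1/\alpha}$, and substituting it back collapses the sum to $F^\ast(S)=\bigl(\sum_{y\in S}c(y)^{1/\alpha}\bigr)^{\alpha}$. Next I would dispose of the boundary: any $q$ charging a $y$ with $c(y)=+\infty$ — in particular any $y\notin\operatorname{supp}(\Pm_Y)$, because then $p(x,y)=0$ while $p(x)>0$ for some $x$ and $\alpha<0$ — makes $F=+\infty$ and $D_\alpha=-\infty$, hence is never optimal. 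Finally, among admissible supports $S\subseteq\{y:c(y)<\infty\}$, the sum $\sum_{y\in S}c(y)^{1/\alpha}$ of nonnegative terms is largest when $S$ is the whole set, and as $u\mapsto u^{\alpha}$ is decreasing for $\alpha<0$, $F^\ast$ is then smallest; adjoining the points with $c(y)=+\infty$ changes nothing since $(+\infty)^{1/\alpha}=0$. Therefore
\[
\max_{\Q_Y}D_\alpha(\Pm_{XY}\|\Pm_X\Q_Y)=\frac{1}{\alpha-1}\log\Bigl(\sum_y c(y)^{1/\alpha}\Bigr)^{\alpha}=\frac{\alpha}{\alpha-1}\log\sum_y c(y)^{1/\alpha}.
\]

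To finish I would rewrite $c(y)^{1/\alpha}$ in the stated form. Factoring $c(y)=p(y)^{\alpha}\sum_x p(x)\bigl(\tfrac{p(x,y)}{p(x)p(y)}\bigr)^{\alpha}$ and raising to the power $1/\alpha$ gives $c(y)^{1/\alpha}=p(y)\,\mathbb{E}_{\Pm_X}^{\frac1\alpha}\bigl[(\tfrac{d\Pm_{XY}}{d\Pm_X\Pm_Y})^{\alpha}\bigr]$, whence $\sum_y c(y)^{1/\alpha}=\mathbb{E}_{\Pm_Y}\bigl[\mathbb{E}_{\Pm_X}^{\frac1\alpha}[(\tfrac{d\Pm_{XY}}{d\Pm_X\Pm_Y})^{\alpha}]\bigr]$; combining this with the previous display and the sign flip in \eqref{definition} yields the claim. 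As a consistency check, for each fixed $y$ one has $\mathbb{E}_{\Pm_X}[\tfrac{d\Pm_{XY}}{d\Pm_X\Pm_Y}]=1$, so Jensen (with $t\mapsto t^\alpha$ convex for $\alpha<0$) forces the inner quantity to be at least $1$ and its $1/\alpha$-power at most $1$; hence $\sum_y c(y)^{1/\alpha}\in[0,1]$ and $I_\alpha(X,Y)\ge 0$, in line with Proposition~1.

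I expect the main obstacle to be the boundary analysis of the second paragraph. Unlike the $\alpha>0$ case, here $F$ can take the values $0$ and $+\infty$ on the boundary of the simplex, so one must rule out that a degenerate $\Q_Y$ — one charging points outside $\operatorname{supp}(\Pm_Y)$, or one collapsing onto a strict sub-support — beats the interior stationary point. The convexity of $F$ on each admissible face together with the monotonicity of $u\mapsto u^{\alpha}$ for $\alpha<0$ is exactly what makes this go through, and one must be consistent with the conventions $0^{\alpha}=+\infty$ and $(+\infty)^{1/\alpha}=0$ throughout.
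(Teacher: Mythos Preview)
Your argument is correct and arrives at the same optimizer $\Q_Y^\star(y)\propto c(y)^{1/\alpha}$ as the paper, but the route differs. The paper does not set up a Lagrangian: instead it observes that for any $\Q_Y$,
\[
D_\alpha(\Pm_{XY}\|\Pm_X\Q_Y)=\frac{\alpha}{\alpha-1}\log\mathbb{E}_{\Q_Y}^{1/\alpha}\!\left[\frac{\mathbb{E}_{\Pm_X}[\Pm_{Y|X}^\alpha]}{\Q_Y^\alpha}\right]
\le \frac{\alpha}{\alpha-1}\log\mathbb{E}_{\Q_Y}\!\left[\frac{\mathbb{E}_{\Pm_X}^{1/\alpha}[\Pm_{Y|X}^\alpha]}{\Q_Y}\right],
\]
the inequality being Jensen for the convex map $t\mapsto t^{1/\alpha}$ (since $1/\alpha<0$) together with $\alpha/(\alpha-1)>0$; the $\Q_Y$ then cancels from the right-hand side, producing the closed form directly as a uniform upper bound, and one simply checks that the explicit $\Q_Y^\star$ saturates it. Your approach instead minimizes $F(\Q_Y)=\sum_y q(y)^{1-\alpha}c(y)$ by convexity and stationarity, and then separately argues that degenerate $\Q_Y$ on the boundary of the simplex cannot do better. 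The paper's Jensen trick is shorter and sidesteps the boundary discussion entirely (the bound holds for every $\Q_Y$, full stop), while your optimization argument is more mechanical and makes the role of convexity of $t\mapsto t^{1-\alpha}$ explicit; your boundary analysis, though not needed once one has the Jensen bound, is a useful sanity check on where the maximum actually lives.
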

\begin{proof}
We have that for every $\Q_Y$, \begin{align}D_\alpha(\Pm_{XY}\|\Pm_X\Q_Y)&=\frac{1}{\alpha-1}\log\mathbb{E}_{\Q_Y}\left[\mathbb{E}_{\Pm_X}\left[\Pm_{Y|X}^\alpha \Q_Y^{-\alpha}\right]\right] \\ &=
\frac{\alpha}{\alpha-1}\log\mathbb{E}^\frac1\alpha_{\Q_Y}\left[\frac{\mathbb{E}_{\Pm_X}\left[\Pm_{Y|X}^\alpha\right]}{Q_Y^\alpha}\right] \\
&\leq \frac{\alpha}{\alpha-1}\log\mathbb{E}_{\Q_Y}\left[\frac{\mathbb{E}^{\frac1\alpha}_{\Pm_X}\left[\Pm_{Y|X}^\alpha\right]}{Q_Y}\right] \label{Jensen}\\
&=  \frac{\alpha}{\alpha-1}\log\sum_y\mathbb{E}^{\frac1\alpha}_{\Pm_X}\left[\Pm_{Y|X}^\alpha\right]\\
&=  \frac{\alpha}{\alpha-1}\log\sum_y\mathbb{E}^{\frac1\alpha}_{\Pm_X}\left[\Pm_{XY}^\alpha\Pm_X^{-\alpha}\right]\\
&=  \frac{\alpha}{\alpha-1}\log\mathbb{E}_{\Pm_Y}\left[\mathbb{E}^{\frac1\alpha}_{\Pm_X}\left[\left(\frac{\Pm_{X|Y}}{\Pm_X}\right)^\alpha\right]\right] \\
&= -I_\alpha(X,Y).
\end{align}
Where \eqref{Jensen} follows from the convexity of $x^{\frac1k}, k<0$ and Jensen's inequality.
This implies that for every $\Q_Y$:
\begin{align}
   -I_\alpha(X,Y) \geq D_\alpha(\Pm_{XY}\|\Pm_X\Q_Y). \label{IalphaClosedFormExpression}
\end{align}
Moreover one has that, given
\begin{equation}
    \Q_Y^\star(y) = \frac{\Pm_Y(y)\left(\sum_{x}\Pm_{X|Y=y}(x)^\alpha \Pm_X(x)^{1-\alpha}\right)^\frac1\alpha}{\mathbb{E}_{\Pm_Y}\left[\left(\sum_{x}\Pm_{X|Y=y}(x)^\alpha \Pm_X(x)^{1-\alpha}\right)^\frac1\alpha\right]}
\end{equation}
then,
\begin{equation}
    D_\alpha(\Pm_{XY}\|\Pm_X\Q^\star_Y) = - I_\alpha(X,Y).
\end{equation}
\end{proof}
\begin{remark}
In a setting where both $X$ and $Y$ are discrete random variables then denoting with $p_{XY}$, $p_Y$ and $p_X$ the joint point mass functions  and the corresponding marginals and denoting with $p_{X|Y=y}$ the conditional point mass functions one has that:
\begin{align}
I_\alpha(X,Y)\! &=\! \frac{\alpha}{
    1-\alpha}\!\log\!\sum_y\! p_Y(y)\!\!\left(\!\sum_x\! \left(\frac{p_{X|Y=y}(x)}{p_{X}(x)}\!\right)^\alpha\!\!\! p_{X}(x)\!\!\right)^{\!\!\frac{1}{\alpha}} \notag \\
    &= \frac{\alpha}{
    1-\alpha}\!\log\!\sum_y \!\!\left(\!\sum_x p_{Y|X=x}(y)^\alpha p_{X}(x)\right)^{\!\!\frac{1}{\alpha}}.
\end{align}
\end{remark}
\begin{theorem}
Let $X,Y$ be two discrete random variables whose joint point mass function is $p_{XY}$ and the corresponding marginals are given by $p_X$ and $p_Y$ then
\begin{equation}
    I_{-\infty}(X,Y) = - \log\sum_y\left(\min_{x:p_X(x)>0} \Pm_{Y|X=x}(y)\right).
\end{equation}
\begin{proof}
The result follows from noticing that one can re-write $I_\alpha$ as follows
\begin{equation}
    -I_\alpha(X,Y)= \frac{\alpha}{\alpha-1}\log\mathbb{E}_{\Pm_Y}\left[\left\lVert\frac{\Pm_{X|Y}}{\Pm_X}\right\rVert_{L^\alpha(\Pm_X)}\right],
\end{equation}
where $\left\lVert\frac{\Pm_{X|Y}}{\Pm_X}\right\rVert_{L^\alpha(\Pm_X)}$ denotes the ''$\alpha-$norm'' with respect to $\Pm_X$. 
Taking the limit of $\alpha\to-\infty$ one has that $\frac{\alpha}{\alpha-1}\to 1$ and $\left\lVert\frac{\Pm_{X|Y}}{\Pm_X}\right\rVert_{L^\alpha(\Pm_X)}\to \essinf_{\Pm_X} \frac{\Pm_{X|Y}}{\Pm_X}.$
The conclusion then follows from simple algebraic manipulations.
\end{proof}
\end{theorem}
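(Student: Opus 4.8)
The plan is to obtain $I_{-\infty}$ as the $\alpha\to-\infty$ limit of the closed-form expression for $I_\alpha$ with $\alpha<0$. Rewriting the inner expectation as an $\alpha$-quasi-norm with respect to $\Pm_X$ gives the identity $-I_\alpha(X,Y)=\frac{\alpha}{\alpha-1}\log\mathbb{E}_{\Pm_Y}\big[\big\lVert\Pm_{X|Y}/\Pm_X\big\rVert_{L^\alpha(\Pm_X)}\big]$. Since $\frac{\alpha}{\alpha-1}\to1$ and $\log$ is continuous, the problem reduces to computing $\lim_{\alpha\to-\infty}\mathbb{E}_{\Pm_Y}\big[\big\lVert\Pm_{X|Y}/\Pm_X\big\rVert_{L^\alpha(\Pm_X)}\big]$.

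First I would identify the pointwise-in-$y$ limit of the integrand. For fixed $y$ with $\Pm_Y(y)>0$, the map $x\mapsto\Pm_{X|Y=y}(x)/\Pm_X(x)$ is a fixed nonnegative function on the support of $\Pm_X$, and the negative-order counterpart of the standard fact $\lVert g\rVert_{L^t(\mu)}\to\esssup_\mu g$ as $t\to+\infty$ gives $\lVert g\rVert_{L^\alpha(\Pm_X)}\to\essinf_{\Pm_X}g$ as $\alpha\to-\infty$; because the setting is discrete this essential infimum equals the genuine minimum $\min_{x:p_X(x)>0}\Pm_{X|Y=y}(x)/\Pm_X(x)$, with both sides equal to $0$ in the degenerate case where $\Pm_{X|Y=y}$ vanishes somewhere on the support of $\Pm_X$ (there $\mathbb{E}_{\Pm_X}[g^\alpha]=+\infty$ and $1/\alpha<0$, so the quasi-norm is $0$). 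To move the limit inside $\mathbb{E}_{\Pm_Y}$, I would use the monotonicity of $t\mapsto\lVert g\rVert_{L^t(\Pm_X)}$ on a probability space: for every $\alpha<0$, $\big\lVert\Pm_{X|Y}/\Pm_X\big\rVert_{L^\alpha(\Pm_X)}\le\big\lVert\Pm_{X|Y}/\Pm_X\big\rVert_{L^1(\Pm_X)}=\mathbb{E}_{\Pm_X}[\Pm_{X|Y}/\Pm_X]=1$, so the integrands are uniformly dominated by the $\Pm_Y$-integrable constant $1$ and the bounded convergence theorem applies. This yields $-I_{-\infty}(X,Y)=\log\mathbb{E}_{\Pm_Y}\big[\min_{x:p_X(x)>0}\Pm_{X|Y}(x)/\Pm_X(x)\big]$.

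It then remains to rewrite the right-hand side. By Bayes' rule $\Pm_{X|Y=y}(x)/\Pm_X(x)=\Pm_{Y|X=x}(y)/\Pm_Y(y)$ for every $x$ in the support of $\Pm_X$ and $y$ in the support of $\Pm_Y$, so $\mathbb{E}_{\Pm_Y}\big[\min_{x:p_X(x)>0}\Pm_{X|Y}(x)/\Pm_X(x)\big]=\sum_y\Pm_Y(y)\cdot\Pm_Y(y)^{-1}\min_{x:p_X(x)>0}\Pm_{Y|X=x}(y)=\sum_y\min_{x:p_X(x)>0}\Pm_{Y|X=x}(y)$, which is exactly the claimed formula (points $y$ outside the support of $\Pm_Y$ contribute $0$ to both sides). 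The step that needs the most care is the uniform domination by $1$, as that is what licenses interchanging $\lim_{\alpha\to-\infty}$ with the (possibly infinite) sum over $y$; the convergence of the negative-order quasi-norm to the essential infimum and the Bayes rewriting are routine.
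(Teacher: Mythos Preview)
Your proposal is correct and follows essentially the same route as the paper: rewrite $-I_\alpha$ via the $L^\alpha(\Pm_X)$-quasi-norm, send $\alpha\to-\infty$ so that the prefactor tends to $1$ and the quasi-norm tends to the essential infimum, and then simplify via Bayes' rule. You go beyond the paper by explicitly justifying the interchange of $\lim_{\alpha\to-\infty}$ with $\mathbb{E}_{\Pm_Y}$ through the uniform bound $\lVert \Pm_{X|Y}/\Pm_X\rVert_{L^\alpha(\Pm_X)}\le 1$ and bounded convergence, a point the paper leaves implicit under ``simple algebraic manipulations.''
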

\begin{remark}
The quantity $I_{-\infty}$ has already appeared in the literature as ``Maximal-Cost Leakage'' \cite[Thm. 15, Eq. (95)]{leakageLong}. Indeed, assuming equivalence between $\Pm_{XY}$ and $\Pm_X\Q_Y$ and using the Skew Symmetry of $D_\alpha$ one has that $-D_{-\infty}(\Pm_{XY}\|\Pm_X\Q_Y)=D_\infty(\Pm_X\Q_Y\|\Pm_{XY})$ and consequently 
\begin{align}
    I_{-\infty}(X,Y)&=-\max_{\Q_Y} D_{-\infty}(\Pm_{XY}\|\Pm_X\Q_Y) \\ &= \min_{\Q_Y}D_\infty(\Pm_X\Q_Y\|\Pm_{XY}) \\
    &= \mathcal{L}^{c}(X\!\!\to\!\!Y).
\end{align}
\end{remark}
\section{Properties}
For $I_\alpha$ with $\alpha<0$ we can show the following properties.
\begin{theorem}
Let $X,Y$ be two random variables such that $\Pm_{XY}\ll\Pm_X\Pm_Y$ and assume that $\alpha<0$, then:
\begin{enumerate}
    \item $I_\alpha(X,Y)\geq 0$ with equality iff $X$ and $Y$ are independent; \label{nonNegativity}
    \item $I_\alpha(X,Y) \neq I_\alpha(Y,X)$; \label{asymmetry}
    \item Let $0>\alpha_1>\alpha_2$ then $I_{\alpha_1}(X,Y)\leq I_{\alpha_2}(X,Y)$; \label{nonIncreasability}
    \item $I_\alpha(X,Y)\leq \mathcal{L}^c(X\!\!\to\!\!Y)$ for every $-\infty<\alpha<0$; \label{mcLeakageProperty}
    \item Let $X-Y-Z$ be a Markov Chain, $I_\alpha(X,Z)\leq \min\{ I_\alpha(X,Y), I_\alpha(Y,Z) \}$; \label{DPI}
    \item $\exp\left(-\frac{\alpha-1}{\alpha}I_\alpha(X,Y)\right)$ is concave in $\Pm_{Y|X}$ and $I_\alpha(X,Y)$ is convex in $\Pm_{Y|X}$. \label{convexity} 
\end{enumerate}
\end{theorem}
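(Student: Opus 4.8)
The plan is to dispatch the six claims one at a time, using throughout the closed-form expression $I_\alpha(X,Y)=\frac{\alpha}{1-\alpha}\log\E_{\Pm_Y}\!\big[\big(\E_{\Pm_X}[f_Y^\alpha]\big)^{1/\alpha}\big]$ with $f_y:=d\Pm_{X|Y=y}/d\Pm_X$ (so that $\E_{\Pm_X}[f_y]=1$), together with the facts about $D_\alpha$ for $\alpha<0$ recalled above. For the non-negativity claim I would apply Jensen's inequality to the convex map $t\mapsto t^\alpha$ ($\alpha<0$) to get $\E_{\Pm_X}[f_y^\alpha]\ge(\E_{\Pm_X}[f_y])^\alpha=1$; raising to the negative power $1/\alpha$ reverses this, so $\big(\E_{\Pm_X}[f_y^\alpha]\big)^{1/\alpha}\le 1$, its expectation over $\Pm_Y$ is $\le1$, the logarithm is $\le0$, and multiplying by $\frac{\alpha}{1-\alpha}<0$ gives $I_\alpha\ge0$. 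Equality forces equality in the (strict) Jensen step for $\Pm_Y$-a.e.\ $y$, i.e.\ $f_y\equiv\E_{\Pm_X}[f_y]=1$, i.e.\ $\Pm_{X|Y=y}=\Pm_X$, which is independence. The asymmetry claim only requires exhibiting one explicit joint law with $I_\alpha(X,Y)\ne I_\alpha(Y,X)$: the closed form is visibly not invariant under $X\leftrightarrow Y$, and a small numerical channel, as in the counterexamples above, settles it.

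For the two monotonicity claims the single idea is that, for each $y$, the map $\alpha\mapsto\big(\E_{\Pm_X}[f_y^\alpha]\big)^{1/\alpha}=\lVert f_y\rVert_{L^\alpha(\Pm_X)}$ is a (power) mean, hence non-decreasing on $[-\infty,+\infty]$, so that $A_\alpha:=\E_{\Pm_Y}\big[\lVert f_Y\rVert_{L^\alpha(\Pm_X)}\big]$ is non-decreasing in $\alpha$ and lies in $(0,1]$ by the previous item. Writing $I_\alpha=-c_\alpha\log A_\alpha$ with $c_\alpha:=\frac{\alpha}{\alpha-1}$, one has $c_\alpha\in(0,1)$, $c_\alpha$ decreasing in $\alpha$, $c_\alpha\to1$ as $\alpha\to-\infty$, and $-\log A_\alpha\ge0$; hence for $0>\alpha_1>\alpha_2$, $I_{\alpha_1}=c_{\alpha_1}(-\log A_{\alpha_1})\le c_{\alpha_1}(-\log A_{\alpha_2})\le c_{\alpha_2}(-\log A_{\alpha_2})=I_{\alpha_2}$, and the same chain with $\alpha_2=-\infty$ (where $c_{-\infty}=1$ and $-\log A_{-\infty}=I_{-\infty}(X,Y)=\mathcal{L}^c(X\to Y)$ by the $I_{-\infty}$ formula above) yields $I_\alpha(X,Y)\le\mathcal{L}^c(X\to Y)$. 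Alternatively, the monotonicity in $\alpha$ also follows from $D_\alpha$ being non-decreasing in $\alpha$ applied to the representation $I_\alpha=-\max_{\Q_Y}D_\alpha(\Pm_{XY}\|\Pm_X\Q_Y)$.

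The data-processing inequality is the crux. For $\alpha<0$, $D_\alpha$ satisfies the \emph{reversed} DPI $D_\alpha(K\nu\|K\mu)\ge D_\alpha(\nu\|\mu)$, and the minus sign in the definition of $I_\alpha$ turns this into the usual orientation. For $X-Y-Z$: (i) to get $I_\alpha(X,Z)\le I_\alpha(X,Y)$, apply the channel $\Pm_{Z|Y}$ to the $Y$-coordinate of a pair; by the Markov property it sends $\Pm_{XY}$ to $\Pm_{XZ}$ and the product $\Pm_X\Q_Y$ to the product $\Pm_X(\Pm_{Z|Y}\Q_Y)$, so with $\Q_Y$ an optimizer for $I_\alpha(X,Y)$ and $\Q_Z:=\Pm_{Z|Y}\Q_Y$ one gets $-I_\alpha(X,Z)\ge D_\alpha(\Pm_{XZ}\|\Pm_X\Q_Z)\ge D_\alpha(\Pm_{XY}\|\Pm_X\Q_Y)=-I_\alpha(X,Y)$; (ii) to get $I_\alpha(X,Z)\le I_\alpha(Y,Z)$, apply the kernel $(y,z)\mapsto(x,z)$ with $x$ drawn from $\Pm_{X|Y=y}$ (legitimate since under $X-Y-Z$ this conditional does not depend on $z$), which sends $\Pm_{YZ}$ to $\Pm_{XZ}$ and $\Pm_Y\Q_Z$ to $\Pm_X\Q_Z$, and argue as before. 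Taking the minimum of the two bounds gives the statement. The delicate parts, which I expect to be the main obstacle, are the bookkeeping that these kernels keep product measures products, and verifying the absolute-continuity hypothesis $\Pm_{XY}\ll\Pm_X\Q_Y$ under which the reversed DPI was stated for the relevant optimizing $\Q_Y$ (perturbing $\Q_Y$ towards $\Pm_Y$ if necessary).

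Finally, for the convexity/concavity claim I would start from $\exp\!\big(-\tfrac{\alpha-1}{\alpha}I_\alpha(X,Y)\big)=\sum_y\big(\sum_x\Pm_X(x)\,\Pm_{Y|X=x}(y)^\alpha\big)^{1/\alpha}$, valid by the closed form: for fixed $\Pm_X$ each $y$-summand is the weighted power mean (of order $\alpha$) of the vector $(\Pm_{Y|X=x}(y))_x$, which is a linear image of $\Pm_{Y|X}$, and power means of order $\le1$ are concave on the positive orthant (a classical property of H\"older means); a sum of concave functions being concave gives the first assertion. For the second, $I_\alpha(X,Y)=\frac{\alpha}{1-\alpha}\log\big(\sum_y(\cdots)^{1/\alpha}\big)$ is $\log$ (concave, increasing) composed with a concave positive function, hence concave, and multiplication by $\frac{\alpha}{1-\alpha}<0$ makes it convex in $\Pm_{Y|X}$. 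Apart from the kernel bookkeeping in the DPI, every step reduces to Jensen's inequality, monotonicity of power means, or their concavity.
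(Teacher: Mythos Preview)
Your proposal is correct and, for items \ref{asymmetry}, \ref{mcLeakageProperty}, \ref{DPI} and \ref{convexity}, essentially identical to the paper's proof (the paper also exhibits an explicit asymmetric example, uses the reversed DPI for $D_\alpha$ with the same two kernels $\Pm_{Z|Y}$ and $\Pm_{X|Y}\otimes\mathrm{id}_Z$, and derives concavity from Reverse-Minkowski, which is exactly your concavity-of-power-means statement).

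On items \ref{nonNegativity} and \ref{nonIncreasability} you take a slightly different but equally valid path. For \ref{nonNegativity}, the paper argues abstractly via $I_\alpha=-D_\alpha(\Pm_{XY}\|\Pm_X\Q_Y^\star)$ and the fact that $D_\alpha\le 0$ for $\alpha<0$, extracting the equality case from $D_\alpha=0\Rightarrow \Pm_{Y|X}=\Q_Y^\star$; you instead apply Jensen directly to the closed form, which is more self-contained and yields the equality case cleanly. For \ref{nonIncreasability}, the paper uses only the variational form together with the monotonicity of $\alpha\mapsto D_\alpha$ (your ``alternatively'' remark), whereas your primary argument combines two monotonicities --- of the power mean $\alpha\mapsto A_\alpha$ and of the prefactor $c_\alpha=\alpha/(\alpha-1)$ --- which is a bit longer but has the advantage of handling the $\alpha\to-\infty$ limit (item \ref{mcLeakageProperty}) in the same stroke without a separate continuity argument. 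Either route is fine.
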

\begin{proof}

\ref{nonNegativity}): We have that $I_\alpha(X,Y)=-D_{\alpha}(\Pm_{XY}\|\Pm_X\Q_Y^\star)$ and given that $D_\alpha\leq 0$ if $\alpha\leq 0$ we have the non-negativity of $I_\alpha$. Moreover, if $X$ and $Y$ are independent then $\Q^\star_Y(y)= \Pm_Y(y)$ and $D_\alpha(\Pm_{XY}\|\Pm_X\Pm_Y)=0$. On the other hand, if $I_\alpha(X,Y)=0$ then $D_\alpha(\Pm_{XY}\|\Pm_X\Q^\star_Y)=0$ which means that $\Pm_{Y|X=x}=\Q_Y^\star$, hence $Y$ does not depend on $X$.
\\
\ref{asymmetry}): It's clear from the definition and shown more concretely in Example \ref{exampleAsymm}.\\
\ref{nonIncreasability}): We have that, denoting with $\Q_Y^{\alpha_2}=\argmax_{\Q_Y} D_{\alpha_2}(\Pm_{XY}\|\Pm_X\Q_Y)$
\begin{align}
    -I_{\alpha_1}(X,Y) & = \max_{Q_Y}D_{\alpha_1}(\Pm_{XY}\|\Pm_X\Q_Y) \\
    &\geq D_{\alpha_1}(\Pm_{XY}\|\Pm_X\Q_Y^{\alpha_2}) \\
    &\geq D_{\alpha_2}(\Pm_{XY}\|\Pm_X\Q_Y^{\alpha_2}) \label{nonDecreas}
    \\
    &= -I_{\alpha_2}(X,Y)
\end{align}
where \eqref{nonDecreas} follows from the non-decreasability of $D_\alpha$ for $\alpha\in[-\infty,\infty]$ \cite[Thm. 39]{RenyiKLDiv}.
\\
\ref{mcLeakageProperty}): Follows from \ref{nonIncreasability}) and the fact that $\mathcal{L}^c(X\!\!\to\!\!Y)=$ $\lim_{\alpha\to-\infty}I_\alpha(X,Y)$.\\
\ref{DPI}): let $X-Y-Z$ be a Markov chain and let $K_{Z|Y}$ denote the corresponding Markov Kernel $K_{Z|Y}:(\mathcal{Y},\mathcal{F}) \to (\mathcal{Z},\mathcal{\hat{F}})$ induced by the transition probabilities $\Pm_{Z|Y}$. Consequently, given any probability measure $\Q_Y$ one can construct a corresponding $\Q_Z=\Q_Y K_{Z|Y}$ where substantially $\Q_Z(z)=\mathbb{E}_{\Q_Y}[\Pm_{Z|Y}(z)]$. By the Data-Processing Inequality for $D_\alpha$ one has that for every $\Q_Y$:
\begin{equation}
    -D_\alpha(\Pm_{XZ}\|\Pm_X\Q_Z) \leq -D_\alpha(\Pm_{XY}\|\Pm_X\Q_Y)
\end{equation}
Taking the $\inf$ with respect to $\Q_Y$ leads to 
\begin{equation}
    -D_\alpha(\Pm_{XZ}\|\Pm_X\Q^\star_Z) \leq I_\alpha(X,Y)
\end{equation}
and given that $I_\alpha(X,Z)\leq -D_\alpha(\Pm_{XZ}\|\Pm_X\Q_Z)$ for every $\Q_Z$ the statement follows. 
The other inequality follows by observing that considering the Kernel determined by $\Pm_{XZ|XY}$ and denoted by $K_{XZ|YZ}$ one has that $\Pm_{XZ}=\Pm_{YZ}K_{XZ|YZ}$ while $\Pm_X\Q_Z = (\Pm_X\Q_Y)K_{XZ|YZ}$, consequently by the Data-Processing Inequality for $D_\alpha$ one has that 
\begin{equation}
    -D_\alpha(\Pm_{XZ}\|\Pm_X\Q_Z) \leq -D_\alpha(\Pm_{YZ}\|\Pm_Y\Q_Z).
\end{equation}
The statement follows from a similar argument as above.
\\
\ref{convexity}): One can rewrite the expression as follows \begin{equation}\exp\left(-\frac{\alpha-1}{\alpha}I_\alpha(X,Y)\right)= \sum_y \left\lVert \Pm_{Y|X} \right\rVert_{L^\alpha(\Pm_X)}\end{equation}
and concavity follows from the Reverse-Minkowski's inequality.
Convexity of $I_\alpha(X,Y)$ follows from the concavity just proven and the fact that $-\frac{\alpha}{\alpha-1}\log(x)$ is a non-increasing convex function for a given $\alpha$ and composition of a non-increasing convex function with a concave one gives rise to a convex function \cite[Eq (3.10), pag. 84]{boyd}.
\end{proof}
Following \cite{verduAlpha} we will now look at some specific choices of $X$ and $Y$ and compute the corresponding values of $I_\alpha$. Given that the expression for the information measure is essentially identical for $\alpha<0$ and $\alpha$, the values in these examples match the one for $I_\alpha$ with $\alpha>0$ except for a minus sign that ensures the non-negativity of the measure.
\begin{example}
Let $X,Y\sim\text{Ber}(1/2)$ and let $\mathbb{P}(Y\neq X)=\delta$ then if $\alpha<0$
\begin{align}
    I_{\alpha}(X,Y)= I_\alpha(Y,X) = - d_\alpha(\delta\|1/2),
\end{align}
where $d_\alpha(p\|q)=\frac{1}{\alpha-1}\log(p^\alpha q^{1-\alpha}+(1\!-p)^\alpha(1\!-q)^{1-\alpha})$ denotes the binary $\alpha$-divergence.
\end{example}
\begin{example}\label{exampleAsymm}
Let $X\sim\text{Ber}(1/2)$ and let $Y\in\{0,1,e\}$. Assume also that 
\begin{equation}
   P_{Y|X=x}(y)=
    \begin{cases}
        1-\delta,\,\, x=y \\
        \delta,\,\, y=e \\
        0,\text{ else}.
    \end{cases}
\end{equation}
Then, if $\alpha<0$ 
\begin{align}
    I_\alpha(X,Y) &= -\frac{\alpha}{\alpha-1}\log_2\left(\delta+(1-\delta)2^\frac{\alpha-1}{\alpha}\right)  \\
    I_\alpha(Y,X) &=  -\frac{1}{\alpha-1}\log_2\left(\delta+(1-\delta)2^{\alpha-1}\right)
\end{align}
\end{example}
\begin{example}
Let $X\sim \mathcal{N}(0,\sigma^2_X)$ and $N\sim \mathcal{N}(0,\sigma^2_N)$ with $N$ independent from $X$ then if $-\frac{\sigma^2_N}{\sigma^2_X}<\alpha<0$
\begin{align}
    I_\alpha(X,X+N) &= -\frac{1}{2}\log\left(1+\alpha\frac{\sigma^2_X}{\sigma^2_N}\right).
\end{align}
This example shows a major difference in behaviour between the two families of information measures: the value of $I_\alpha(X,X+N)$ has the familiar expression only if $|\alpha|\leq \frac{\sigma^2_N}{\sigma^2_X}$.
\end{example}
\section{Bound}
\begin{theorem}\label{lowerBoundHolder}
Let $X,Y$ be two random variables whose joint measure is $\Pm_{XY}$ and the corresponding marginals are given by $\Pm_X$ and $\Pm_Y$.
For every $f:\mathcal{X}\times\mathcal{Y}\to\mathbb{R}^+$ one has that 
\begin{align}
    \mathbb{E}_{\Pm_{XY}}[f(X,Y)] \geq &\mathbb{E}^{\frac{1}{\beta'}}_{\Pm_Y}\left[\mathbb{E}^{\frac{\beta'}{\beta}}_{\Pm_X}\left[f(X,Y)^\beta\right]\right]\\ &\cdot\mathbb{E}^{\frac{1}{\alpha'}}_{\Pm_Y}\left[\mathbb{E}^{\frac{\alpha'}{\alpha}}_{\Pm_X}\left[\left(\frac{d\Pm_{XY}}{d\Pm_X\Pm_{Y}}\right)^{\alpha}\right]\right],
\end{align}
where $\frac1\alpha+\frac1\beta = 1 = \frac{1}{\alpha'}+\frac{1}{\beta'}$ and $\alpha,\alpha'<1$. 
\end{theorem}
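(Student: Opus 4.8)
The plan is to obtain the inequality from two nested applications of the \emph{reverse} Hölder inequality --- first in the variable $X$, conditionally on $Y$, and then in the variable $Y$. I would start with a change of measure: assuming $\Pm_{XY}\ll\Pm_X\Pm_Y$ (otherwise the right-hand side collapses and there is nothing to prove), set $R:=\tfrac{d\Pm_{XY}}{d\Pm_X\Pm_Y}$ and write
\[
\E_{\Pm_{XY}}[f(X,Y)] \;=\; \E_{\Pm_X\Pm_Y}\!\big[f(X,Y)\,R(X,Y)\big] \;=\; \E_{\Pm_Y}\Big[\,\E_{\Pm_X}\!\big[f(X,Y)\,R(X,Y)\big]\Big].
\]

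Next I would fix $y$ and apply the reverse Hölder inequality to the inner expectation with the conjugate pair $(\beta,\alpha)$, $\tfrac1\beta+\tfrac1\alpha=1$: since $\alpha<1$, exactly one of $\alpha,\beta$ lies in $(0,1)$ and the other is negative (or $\pm\infty$ in the limiting cases), which is precisely the regime in which Hölder's inequality reverses for nonnegative integrands. Using $f\ge 0$ and $R\ge 0$ this yields, for each $y$,
\[
\E_{\Pm_X}\!\big[f(X,y)R(X,y)\big] \;\ge\; \E_{\Pm_X}^{1/\beta}\!\big[f(X,y)^\beta\big]\;\E_{\Pm_X}^{1/\alpha}\!\big[R(X,y)^\alpha\big].
\]
I would then take $\E_{\Pm_Y}$ of both sides (monotonicity of the integral) and apply the reverse Hölder inequality once more, this time over $\Pm_Y$, with the conjugate pair $(\beta',\alpha')$, $\tfrac1{\alpha'}+\tfrac1{\beta'}=1$ and $\alpha'<1$, to the product of $G(Y):=\E_{\Pm_X}^{1/\beta}[f(X,Y)^\beta]$ and $H(Y):=\E_{\Pm_X}^{1/\alpha}[R(X,Y)^\alpha]$, obtaining
\[
\E_{\Pm_Y}[G\,H] \;\ge\; \E_{\Pm_Y}^{1/\beta'}\!\big[G^{\beta'}\big]\,\E_{\Pm_Y}^{1/\alpha'}\!\big[H^{\alpha'}\big] \;=\; \E_{\Pm_Y}^{1/\beta'}\!\Big[\E_{\Pm_X}^{\beta'/\beta}\big[f^\beta\big]\Big]\,\E_{\Pm_Y}^{1/\alpha'}\!\Big[\E_{\Pm_X}^{\alpha'/\alpha}\big[R^\alpha\big]\Big],
\]
where I used $G^{\beta'}=\E_{\Pm_X}^{\beta'/\beta}[f^\beta]$ and $H^{\alpha'}=\E_{\Pm_X}^{\alpha'/\alpha}[R^\alpha]$. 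Chaining the three displays gives exactly the claimed bound.

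The only genuine content is the reverse Hölder step, so that is where I expect the main work to lie: one must invoke (or include a one-line proof of) Hölder's inequality in its negative-exponent form and check that in each of the two applications the single hypothesis $\alpha<1$ (resp.\ $\alpha'<1$) is exactly what forces the inequality to point in the $\ge$ direction --- everything afterwards is bookkeeping of the exponents $\beta'/\beta$, $\alpha'/\alpha$, $1/\beta'$, $1/\alpha'$. Some care is also needed with the measurability of $y\mapsto G(y),H(y)$ and with degenerate cases --- $R$ vanishing on a $\Pm_X$-positive set when $\alpha<0$, or the relevant inner integrals being infinite --- but in each such case the corresponding factor on the right-hand side equals $0$, so the bound is trivial there and no extra hypotheses are needed. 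Likewise, no finiteness assumption on $\E_{\Pm_{XY}}[f]$ is required: if the left-hand side is $+\infty$ the statement is vacuous.
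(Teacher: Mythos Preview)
Your proposal is correct and follows essentially the same route as the paper's proof: write $\E_{\Pm_{XY}}[f]$ as $\E_{\Pm_X\Pm_Y}[fR]$ with $R=d\Pm_{XY}/d(\Pm_X\Pm_Y)$, apply the reverse H\"older inequality first in $X$ (with the pair $(\beta,\alpha)$) and then in $Y$ (with the pair $(\beta',\alpha')$). The paper's proof is essentially just the chain of displays you wrote, without the extra discussion of degenerate cases and measurability.
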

\begin{proof}
\begin{align}
 \mathbb{E}_{\Pm_{XY}}[f(X,Y)] =\, &\mathbb{E}_{\Pm_X\Pm_Y}\left[f(X,Y)\left(\frac{d\Pm_{XY}}{d\Pm_X\Pm_{Y}}\right)\right] \\ 
 \geq\, &\mathbb{E}_{\Pm_Y}\Bigg[\mathbb{E}^{\frac{1}{\beta}}_{\Pm_X}\left[f(X,Y)^\beta\right]\\\, &\,\,\,\,\,\,\,\,\,\,\,\,\,\, \cdot\mathbb{E}^{\frac{1}{\alpha}}_{\Pm_X}\left[\left(\frac{d\Pm_{XY}}{d\Pm_X\Pm_{Y}}\right)^{\alpha}\right]\Bigg] \\ 
 \geq\, &\mathbb{E}^{\frac{1}{\beta'}}_{\Pm_Y}\left[\mathbb{E}^{\frac{\beta'}{\beta}}_{\Pm_X}\left[f(X,Y)^\beta\right]\right]\\ \cdot&\,\mathbb{E}^{\frac{1}{\alpha'}}_{\Pm_Y}\left[\mathbb{E}^{\frac{\alpha'}{\alpha}}_{\Pm_X}\left[\left(\frac{d\Pm_{XY}}{d\Pm_X\Pm_{Y}}\right)^{\alpha}\right]\right],
\end{align} 
where each inequality follows from applying the reverse H\"older's inequality (which, in turn, requires the positivity of $f$).
\end{proof}
Taking the limit of $\alpha'\to 1$ (which implies $\beta'\to-\infty$) leads to the following bound involving $I_\alpha(X,Y)$ with $\alpha<1$:
\begin{corollary}\label{sibsIalphaBoundLess1}
Consider the same setting as in Theorem \ref{lowerBoundHolder}. For every $f:\mathcal{X}\times\mathcal{Y}\to\mathbb{R}^+$ and for every $\alpha<1$ and $\beta=\frac{\alpha-1}{\alpha}$ one has that
\begin{align}
 \mathbb{E}_{\Pm_{XY}}[f(X,Y)] \geq  &\essinf_{\Pm_Y}\left[\mathbb{E}^{\frac{1}{\beta}}_{\Pm_X}\left[f(X,Y)^\beta\right]\right]\\ &\cdot\exp\left(\sign(\alpha)\cdot\frac{\alpha-1}{\alpha}I_\alpha(X,Y)\right).
\end{align} 
\end{corollary}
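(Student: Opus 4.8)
The plan is to obtain the bound by letting $\alpha'\uparrow 1$ in Theorem~\ref{lowerBoundHolder}, which forces $\beta'\to-\infty$ (since $\tfrac1{\beta'}=1-\tfrac1{\alpha'}\to 0^{-}$), while keeping $\alpha$ and its conjugate $\beta$ fixed. The key structural observation is that each of the two factors on the right-hand side of Theorem~\ref{lowerBoundHolder} is an $L^{p}(\Pm_Y)$ (quasi-)norm of a function of $Y$ alone that does not involve $\alpha'$ or $\beta'$: with
\[
 g(y)=\mathbb{E}^{1/\beta}_{\Pm_X}\!\big[f(X,y)^\beta\big],\qquad
 h(y)=\mathbb{E}^{1/\alpha}_{\Pm_X}\!\Big[\big(\tfrac{d\Pm_{XY}}{d\Pm_X\Pm_Y}\big)^{\alpha}\Big],
\]
one has $\mathbb{E}^{\beta'/\beta}_{\Pm_X}[f^\beta]=g(Y)^{\beta'}$ and $\mathbb{E}^{\alpha'/\alpha}_{\Pm_X}[(\,\cdot\,)^{\alpha}]=h(Y)^{\alpha'}$, so that Theorem~\ref{lowerBoundHolder} reads simply $\mathbb{E}_{\Pm_{XY}}[f(X,Y)]\ge\norm{g}_{L^{\beta'}(\Pm_Y)}\,\norm{h}_{L^{\alpha'}(\Pm_Y)}$.

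Then I would pass to the limit in each factor separately. For the $h$-factor, note that $0\le h\le 1$ $\Pm_Y$-a.e.\ (by Jensen's inequality $\mathbb{E}_{\Pm_X}[(\tfrac{d\Pm_{XY}}{d\Pm_X\Pm_Y})^{\alpha}]\ge 1$ when $\alpha<0$, resp.\ $\le 1$ when $0<\alpha<1$, since $\mathbb{E}_{\Pm_X}[\tfrac{d\Pm_{XY}}{d\Pm_X\Pm_Y}]=1$), so dominated convergence gives $\norm{h}_{L^{\alpha'}(\Pm_Y)}\to\norm{h}_{L^{1}(\Pm_Y)}=\mathbb{E}_{\Pm_Y}[h(Y)]$ as $\alpha'\uparrow 1$. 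For the $g$-factor I would invoke the standard fact that on a probability space the $L^{p}$ (quasi-)norm of a nonnegative function decreases to its essential infimum as $p\to-\infty$, so $\norm{g}_{L^{\beta'}(\Pm_Y)}\to\essinf_{\Pm_Y}g(Y)=\essinf_{\Pm_Y}\mathbb{E}^{1/\beta}_{\Pm_X}[f(X,Y)^\beta]$. Since the left-hand side $\mathbb{E}_{\Pm_{XY}}[f(X,Y)]$ does not depend on $\alpha'$, the inequality survives the limit, giving $\mathbb{E}_{\Pm_{XY}}[f(X,Y)]\ge\big(\essinf_{\Pm_Y}\mathbb{E}^{1/\beta}_{\Pm_X}[f(X,Y)^\beta]\big)\cdot\mathbb{E}_{\Pm_Y}[h(Y)]$.

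Finally, I would rewrite $\mathbb{E}_{\Pm_Y}[h(Y)]$ in terms of $I_\alpha$. For $\alpha<0$ this is exactly the closed-form expression for $I_\alpha$ established earlier in the paper, which yields $\mathbb{E}_{\Pm_Y}[h(Y)]=\exp\!\big(-\tfrac{\alpha-1}{\alpha}I_\alpha(X,Y)\big)$; for $0<\alpha<1$ the same manipulation applied to the definition \eqref{sibsIAlpha} and its closed form from \cite{verduAlpha} gives $\mathbb{E}_{\Pm_Y}[h(Y)]=\exp\!\big(\tfrac{\alpha-1}{\alpha}I_\alpha(X,Y)\big)$. Since $\sign(\alpha)=-1$ in the first case and $+1$ in the second, both are summarised by the factor $\exp\!\big(\sign(\alpha)\,\tfrac{\alpha-1}{\alpha}I_\alpha(X,Y)\big)$, which is the stated bound.

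The routine algebra aside, the only delicate points are the two limit passages. I expect the convergence $\norm{g}_{L^{\beta'}}\to\essinf_{\Pm_Y}g$ as $\beta'\to-\infty$ to be the main thing requiring attention, in particular the degenerate case $\essinf_{\Pm_Y}g=0$ (in which the limiting inequality is trivially true) and the verification that $g<\infty$ $\Pm_Y$-a.e.\ so that the norms are well defined --- which, in the interesting case, follows from positivity of $f$ together with finiteness of $\mathbb{E}_{\Pm_{XY}}[f(X,Y)]$, the boundedness $h\le 1$ having already settled the analogous point for $h$.
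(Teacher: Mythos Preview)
Your proposal is correct and follows precisely the route the paper indicates: the paper's entire ``proof'' is the single sentence preceding the corollary, namely that one takes the limit $\alpha'\to 1$ (hence $\beta'\to-\infty$) in Theorem~\ref{lowerBoundHolder}. You have simply filled in the details the paper leaves implicit --- recasting the right-hand side as $\norm{g}_{L^{\beta'}(\Pm_Y)}\norm{h}_{L^{\alpha'}(\Pm_Y)}$, passing to the limit in each factor, and identifying $\mathbb{E}_{\Pm_Y}[h(Y)]$ with the closed-form expression for $I_\alpha$ --- so there is nothing to contrast.
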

Corollary \ref{sibsIalphaBoundLess1} holds for every non-negative function $f$. However, given $0<\alpha<1$ (which in turn implies $\beta<0$), if there exists an $x$ with positive measure with respect to $\Pm_X$ and such that for every $y$ such that $\Pm_Y(\{y\})>0$, $f(x,y)=0$, one recovers a trivial lower-bound on $\mathbb{E}_{\Pm_{XY}}[f(X,Y)]$. This prevents us from setting $f = \mathds{1}_E$ and recovering a bound that involves probabilities in this case. However, whenever $\alpha<0$ (which implies $0<\beta<1$) we do recover said functions and we can provide the following lower-bound connecting the probability of any event $E$ with respect to the joint $\Pm_{XY}$ and the product of the marginals $\Pm_X\Pm_Y$:
\begin{corollary}\label{sibsIalphaBoundLess1Prob}
Let $X,Y$ be two random variables and consider the probability spaces $(\X\times\Y,\mathcal{F},\Pm_{XY})$ and $(\X\times\Y,\mathcal{F},\Pm_X\Pm_Y)$. Let $E\in\mathcal{F}$ and, given $y\in\Y$, denote with $E_y = \{x:(x,y)\in E\}$, then, for every $\alpha<0$
\begin{align}
    \Pm_{XY}(E) &\geq \min_{y}\Pm_X(E_y)^{\frac1\beta}\cdot \exp\left(-\frac{\alpha-1}{\alpha}I_\alpha(X,Y)\right) \\
    &= \exp\left(\frac{1}{\beta}\left(\log(\min_y\Pm_X(E_y))-I_\alpha(X,Y)\right)\right).
\end{align}
Taking the limit of $\alpha\to-\infty$ which implies $\beta\to 1$ one recovers the following:
\begin{align}
    \Pm_{XY}(E) &\geq \min_{y}\Pm_X(E_y)\exp(-I_{-\infty}(X,Y)) \\ &= \min_{y}\Pm_X(E_y)\exp(-\mathcal{L}^c(X\!\!\to\!\!Y)).
\end{align}
\end{corollary}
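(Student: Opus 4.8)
The plan is to specialize Corollary \ref{sibsIalphaBoundLess1} to the indicator $f=\mathds{1}_E$; the restriction $\alpha<0$ is precisely what makes this substitution yield a non-trivial bound, since it forces the conjugate exponent $\beta$ to be positive. First I would note that $\mathds{1}_E:\X\times\Y\to\{0,1\}\subset\mathbb{R}^+$ is non-negative and that $\alpha<0<1$, so Corollary \ref{sibsIalphaBoundLess1} applies with this choice of $f$. Its left-hand side is then $\mathbb{E}_{\Pm_{XY}}[\mathds{1}_E(X,Y)]=\Pm_{XY}(E)$. For the inner factor, fix $y$: as a function of $x$ we have $\mathds{1}_E(x,y)=\mathds{1}_{E_y}(x)$, which takes values in $\{0,1\}$, and since $\beta>0$ (a consequence of $\alpha<0$) we get $\mathds{1}_{E_y}^{\beta}=\mathds{1}_{E_y}$, hence $\mathbb{E}_{\Pm_X}[\mathds{1}_E(X,y)^\beta]=\Pm_X(E_y)$ and $\mathbb{E}^{1/\beta}_{\Pm_X}[\mathds{1}_E(X,y)^\beta]=\Pm_X(E_y)^{1/\beta}$. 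In the discrete setting the essential infimum over $\Pm_Y$ of $y\mapsto\Pm_X(E_y)^{1/\beta}$ is the minimum over $\{y:\Pm_Y(\{y\})>0\}$, and since $t\mapsto t^{1/\beta}$ is non-decreasing for $\beta>0$ this equals $\big(\min_y\Pm_X(E_y)\big)^{1/\beta}$.

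Next I would handle the exponential factor. Because $\alpha<0$ we have $\sign(\alpha)=-1$, so the factor $\exp\!\big(\sign(\alpha)\cdot\tfrac{\alpha-1}{\alpha}I_\alpha(X,Y)\big)$ becomes $\exp\!\big(-\tfrac{\alpha-1}{\alpha}I_\alpha(X,Y)\big)$. Multiplying the two simplified factors gives the first displayed inequality, and the second line follows by writing $t^{1/\beta}=\exp(\tfrac1\beta\log t)$ and using $\tfrac{\alpha-1}{\alpha}=\tfrac1\beta$. For the limiting statement, observe that $\Pm_{XY}(E)$ does not depend on $\alpha$, so the inequality holds for every $\alpha<0$; letting $\alpha\to-\infty$ and using that weak inequalities are preserved under limits, together with $\beta\to1$, $\tfrac{\alpha-1}{\alpha}\to1$, and $I_\alpha(X,Y)\to I_{-\infty}(X,Y)$, gives $\Pm_{XY}(E)\geq\min_y\Pm_X(E_y)\exp(-I_{-\infty}(X,Y))$. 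Finally $I_{-\infty}(X,Y)=\mathcal{L}^c(X\!\!\to\!\!Y)$ by the identification established in the remark above.

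The only genuinely delicate point is the reduction $\mathds{1}_{E_y}^{\beta}=\mathds{1}_{E_y}$, and hence the non-triviality of the resulting bound; both rely crucially on $\beta>0$, i.e. on $\alpha<0$. This is exactly the obstruction flagged before the statement: for $0<\alpha<1$ one has $\beta<0$, so $\mathds{1}_{E_y}^{\beta}=+\infty$ wherever $x\notin E_y$, and the essential infimum degenerates in the manner described there. Everything else — evaluating the $L^\beta(\Pm_X)$ expectation of an indicator, turning the essential infimum into a minimum in the discrete case, the monotonicity of $t\mapsto t^{1/\beta}$, and passing to the limit $\alpha\to-\infty$ — is routine bookkeeping.
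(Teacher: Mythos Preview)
Your proof is correct and follows precisely the approach implicit in the paper: the corollary is obtained by specializing Corollary~\ref{sibsIalphaBoundLess1} to $f=\mathds{1}_E$, exploiting that $\alpha<0$ gives $\beta\in(0,1)$ so that $\mathds{1}_{E_y}^{\beta}=\mathds{1}_{E_y}$, then identifying the essential infimum with the minimum and passing to the limit $\alpha\to-\infty$. You have simply spelled out in full the details that the paper leaves to the explanatory paragraph preceding the statement.
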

\begin{table*}[!hbpt]

\caption{Behaviour of the bounds expressed in Corollary \ref{sibsIalphaBoundLess1}, Corollary \ref{sibsIalphaBoundLess1Prob}, Corollary \ref{sibsPosAlphaFunctions} and  \cite[Corollary 1]{fullVersionGeneralization}}
\label{comparison}
\centering
\begin{tabular}{ |p{3.1cm}||p{4cm}|p{4cm}|p{4cm}| } 
 \hline \multicolumn{4}{|c|}{ } \\[-7pt]
 \multicolumn{4}{|c|}{Behaviour of the Bound $\mathbb{E}_{\Pm_{XY}}[f(X,Y)]	\lesseqgtr h_\beta(f(X,Y))\cdot g(I_\alpha(X,Y))$} \\[4pt]
 \hline
  Admitted values of $\alpha$ & $\alpha<0 \implies 0<\beta<1$ &$0<\alpha<1 \implies \beta<0$ &$\alpha>1 \implies \beta>1$\\
 \hline
 Information-Measure\! $g(I_\alpha)$\!& $\exp((1-\alpha)/\alpha\cdot I_\alpha(X,Y))$   &$\exp((\alpha-1)/\alpha\cdot I_\alpha(X,Y))$&   $\exp((\alpha-1)/\alpha\cdot I_\alpha(X,Y))$\\
 Multiplicative Term $h_\beta(f)$ &$\min_{y}\mathbb{E}^\frac1\beta_{P_X}\left[f(X,y)^{\beta}\right]$
 &$\min_{y}\mathbb{E}^\frac1\beta_{P_X}\left[f(X,y)^{\beta}\right]$ &$\max_{y}\mathbb{E}^\frac1\beta_{P_X}\left[f(X,y)^{\beta}\right]$  \\
 Multiplicat. Term $h_\beta(\mathds{1}_E)$ &   $\min_y(P_X(E_y))^\frac1\beta$  & cannot be provided  & $\max_y (P_X(E_y))^\frac1\beta$\\  Inequality & $\mathbb{E}_{\Pm_{XY}}[f]\geq h_\beta(f)\cdot g(I_\alpha(X,Y))$ & $\mathbb{E}_{\Pm_{XY}}[f]\geq h_\beta(f)\cdot g(I_\alpha(X,Y))$ & $\mathbb{E}_{\Pm_{XY}}[f]\leq h_\beta(f)\cdot g(I_\alpha(X,Y))$ \\
 References & Corollary \ref{sibsIalphaBoundLess1} and Corollary \ref{sibsIalphaBoundLess1Prob} & Corollary \ref{sibsIalphaBoundLess1}  &Corollary \ref{sibsPosAlphaFunctions} and \cite[Corollary 1]{fullVersionGeneralization} \\
 \hline
\end{tabular}
\end{table*}
Let us compare, through Table \ref{comparison}, Corollary \ref{sibsIalphaBoundLess1}, Corollary \ref{sibsIalphaBoundLess1Prob} and a straight-forward generalisation of \cite[Corollary 1]{fullVersionGeneralization} that we will now state for reference:
\begin{corollary}\label{sibsPosAlphaFunctions}
Consider the same setting as in Theorem \ref{lowerBoundHolder}. For every $f:\mathcal{X}\times\mathcal{Y}\to\mathbb{R}^+$ and for every $\alpha>1$ and $\beta=\frac{\alpha-1}{\alpha}$ one has that
\begin{align}
 \mathbb{E}_{\Pm_{XY}}[f(X,Y)] \leq  &\esssup_{\Pm_Y}\left[\mathbb{E}^{\frac{1}{\beta}}_{\Pm_X}\left[f(X,Y)^\beta\right]\right]\\ &\cdot\exp\left(\frac{\alpha-1}{\alpha}I_\alpha(X,Y)\right).
\end{align} 
\end{corollary}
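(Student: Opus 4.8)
The plan is to run the same two-step argument that underlies Theorem~\ref{lowerBoundHolder} and Corollary~\ref{sibsIalphaBoundLess1}, but with the ordinary H\"older inequality in place of the reverse one: for $\alpha>1$ (equivalently the conjugate exponent $\beta$ with $\tfrac1\alpha+\tfrac1\beta=1$ satisfies $\beta>1$) the relevant inequality points the other way, which is exactly why the conclusion is an upper bound rather than a lower bound. First I would rewrite the left-hand side by a change of measure, $\mathbb{E}_{\Pm_{XY}}[f(X,Y)]=\mathbb{E}_{\Pm_X\Pm_Y}\big[f(X,Y)\,\tfrac{d\Pm_{XY}}{d\Pm_X\Pm_Y}\big]=\mathbb{E}_{\Pm_Y}\big[\mathbb{E}_{\Pm_X}\big[f(X,Y)\,\tfrac{d\Pm_{XY}}{d\Pm_X\Pm_Y}\big]\big]$; note that if $\Pm_{XY}\not\ll\Pm_X\Pm_Y$ then $I_\alpha(X,Y)=+\infty$ for $\alpha>1$ and the claimed bound is vacuously true, so one may assume absolute continuity throughout.

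Next, for each fixed $y$ I would apply H\"older's inequality over $\Pm_X$ with exponent $\alpha$ on the density factor and $\beta$ on $f$, obtaining $\mathbb{E}_{\Pm_X}\big[f(X,y)\,\tfrac{d\Pm_{XY}}{d\Pm_X\Pm_Y}\big]\le \mathbb{E}^{1/\beta}_{\Pm_X}\big[f(X,y)^\beta\big]\cdot\mathbb{E}^{1/\alpha}_{\Pm_X}\big[\big(\tfrac{d\Pm_{XY}}{d\Pm_X\Pm_Y}\big)^\alpha\big]$. Integrating this over $\Pm_Y$ and then bounding the $f$-dependent factor by its $\Pm_Y$-essential supremum yields $\mathbb{E}_{\Pm_{XY}}[f]\le \esssup_{\Pm_Y}\mathbb{E}^{1/\beta}_{\Pm_X}\big[f(X,Y)^\beta\big]\cdot \mathbb{E}_{\Pm_Y}\big[\mathbb{E}^{1/\alpha}_{\Pm_X}\big[\big(\tfrac{d\Pm_{XY}}{d\Pm_X\Pm_Y}\big)^\alpha\big]\big]$, which is the desired shape once the last factor is identified.

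That identification is the one nontrivial ingredient: one must recognise $\mathbb{E}_{\Pm_Y}\big[\mathbb{E}^{1/\alpha}_{\Pm_X}[(\tfrac{d\Pm_{XY}}{d\Pm_X\Pm_Y})^\alpha]\big]=\exp\big(\tfrac{\alpha-1}{\alpha}I_\alpha(X,Y)\big)$, i.e. the closed-form expression for Sibson's $\alpha$-mutual information for $\alpha>1$. This is obtained by evaluating the minimisation in \eqref{sibsIAlpha} exactly as in the proof of the closed form for $\alpha<0$ given above, the minimiser being $\Q_Y^\star(y)\propto \Pm_Y(y)\,\mathbb{E}^{1/\alpha}_{\Pm_X}[(\tfrac{d\Pm_{XY}}{d\Pm_X\Pm_Y})^\alpha]$, the only change being the sign of the prefactor $\tfrac{1}{\alpha-1}$ (see \cite{verduAlpha}). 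A second, equally viable route is to first establish the two-parameter analogue of Theorem~\ref{lowerBoundHolder} for $\alpha,\alpha'>1$ by two successive applications of H\"older's inequality, and then send $\alpha'\to1^+$, so that $\beta'\to+\infty$ and $\mathbb{E}^{1/\beta'}_{\Pm_Y}[(\cdot)^{\beta'}]\to\esssup_{\Pm_Y}$.

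I do not expect a genuine obstacle here: the estimate is routine, and the two points deserving care are (i) the degenerate case $\Pm_{XY}\not\ll\Pm_X\Pm_Y$, handled by the observation $I_\alpha=+\infty$, and (ii) verifying the closed form of $I_\alpha$ for $\alpha>1$ and that the $\Q_Y^\star$ above is a legitimate probability measure attaining the minimum, so that the last step actually closes. In the discrete setting adopted in this paper all measurability and integrability issues are vacuous, and the hypothesis $f\ge 0$ is precisely what guarantees that the factor $\mathbb{E}^{1/\beta}_{\Pm_X}[f(X,Y)^\beta]$ is nonnegative, legitimising the essential-supremum step.
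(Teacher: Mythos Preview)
The paper does not prove this corollary explicitly, presenting it only as a ``straight-forward generalisation'' of \cite[Corollary~1]{fullVersionGeneralization}; the intended argument is precisely the one you give --- the $\alpha>1$ mirror of Theorem~\ref{lowerBoundHolder} and Corollary~\ref{sibsIalphaBoundLess1} with the forward H\"older inequality replacing the reverse one, followed by the identification of the resulting $\Pm_Y$-expectation with the known closed form of $I_\alpha$ for $\alpha>1$ from \cite{verduAlpha}. Your treatment of the degenerate case $\Pm_{XY}\not\ll\Pm_X\Pm_Y$ and both routes you sketch (direct H\"older plus $\esssup$, or the two-parameter bound with $\alpha'\to1^+$) are correct and match the paper's implicit reasoning.
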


Differently from $I_\alpha$ with $\alpha>0$, $I_\alpha$ with $\alpha<0$ is strongly connected to converse/negative results. For instance, it can find applications in lower-bounding the Bayesian Risk in estimation procedures. In these settings one often has a non-negative loss function $\ell(\cdot,\cdot)$ that measures how far a parameter is from its estimation. The purpose is often to lower-bound the minimum expected value of this loss where the minimum is over all the possible estimators of the parameter. In such a framework Corollary \ref{sibsIAlpha} can be useful as we will see in the next section.
\section{Bayesian Risk}
Let $\mathcal{W}$ denote the parameter space and assume that we have access to a prior distribution over this space $\mathcal{P}_W$. Suppose then that we observe $W$ through the family of distributions $\mathcal{P}= \{ \mathcal{P}_{X|W=w}: w\in\mathcal{W} \}.$ Given a function $\phi:\mathcal{X}\to\hat{\mathcal{W}}$,  one can then estimate $W$ from the observations $X\sim \mathcal{P}_{X|W}$ via $\phi(X)=\hat{W}$. Let us denote with $\ell:\mathcal{W}\times\hat{\mathcal{W}}\to \mathbb{R}^+$ a loss function, the Bayesian risk is defined as:
\begin{equation}
    R = \inf_\phi\mathbb{E}[\ell(W,\phi(X)] = \inf_\phi\mathbb{E}[\ell(W,\hat{W})].\label{risk}
\end{equation}
\begin{corollary}
Consider the Bayesian framework just described, the following must hold for every $\alpha<0$ and every $\hat{W}=\phi(X^n)$:
\begin{align}
 \mathbb{E}[\ell(W,\hat{W})] 
    \geq \exp\Bigg(&\frac1\beta \Bigg(-I_\alpha(W,X^n) \notag \\ &+\log\left(\essinf_{P_{\hat{W}}} \mathbb{E}_{P_W}\left[\ell(W,\hat{W})^{\beta}\right]\right)\!\!\Bigg)\!\Bigg).  
\end{align}
Moreover, taking the limit of $\alpha\to-\infty$ one recovers the following:
\begin{equation}
     \mathbb{E}[\ell(W,\hat{W})] 
    \geq \rho \exp\left(-\mathcal{L}^c(W\!\!\to \!\!X^n)\right)\!\essinf_{\Pm_{\hat{W}}}\mathbb{E}_{P_W}\!\left[\ell(W,\hat{W})\right]\!\!.
\end{equation}
\end{corollary}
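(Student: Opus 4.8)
The plan is to instantiate Corollary~\ref{sibsIalphaBoundLess1} for the pair $(W,\hat W)$ with $f=\ell$ and then move the information term onto $X^n$ via data processing. Fix an estimator $\hat W=\phi(X^n)$ and work under the joint law $\Pm_{W\hat W}$. Since $\ell\ge 0$ and, for $\alpha<0$, the conjugate exponent obeys $0<\beta<1$, Corollary~\ref{sibsIalphaBoundLess1} applies with $f=\ell$ (the degeneracy that can occur when $0<\alpha<1$ does not arise here), under the identification $X\leftrightarrow W$, $Y\leftrightarrow\hat W$. This gives
\[
\mathbb{E}[\ell(W,\hat W)]\;\ge\;\essinf_{\Pm_{\hat W}}\!\Big[\mathbb{E}^{1/\beta}_{\Pm_W}\big[\ell(W,\hat W)^{\beta}\big]\Big]\cdot\exp\!\Big(-\tfrac{\alpha-1}{\alpha}\,I_\alpha(W,\hat W)\Big).
\]

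Next I would transfer $I_\alpha(W,\hat W)$ to $I_\alpha(W,X^n)$. Since $W-X^n-\hat W$ is a Markov chain, the data-processing inequality for $I_\alpha$ (property~\ref{DPI}) yields $I_\alpha(W,\hat W)\le I_\alpha(W,X^n)$. For $\alpha<0$ one has $\tfrac{\alpha-1}{\alpha}=\tfrac1\beta>0$, so $t\mapsto\exp(-\tfrac1\beta t)$ is decreasing and replacing $I_\alpha(W,\hat W)$ by the larger $I_\alpha(W,X^n)$ only lowers the right-hand side, preserving the inequality. Finally, as $0<\beta<1$ the map $t\mapsto t^{1/\beta}$ is increasing and commutes with $\essinf$, so the prefactor equals $\exp\!\big(\tfrac1\beta\log\essinf_{\Pm_{\hat W}}\mathbb{E}_{\Pm_W}[\ell^{\beta}]\big)$; merging the two exponentials and using $\tfrac{\alpha-1}{\alpha}=\tfrac1\beta$ reproduces the claimed bound.

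The $\alpha\to-\infty$ statement follows by passing to the limit: $\beta\to1$ so $\tfrac1\beta\to1$, and by dominated convergence for the inner expectation and continuity of $\essinf$ the prefactor tends to $\essinf_{\Pm_{\hat W}}\mathbb{E}_{\Pm_W}[\ell]$, while $I_\alpha(W,X^n)\to I_{-\infty}(W,X^n)=\mathcal{L}^c(W\!\!\to\!\!X^n)$ by the monotonicity of $I_\alpha$ in $\alpha$ (property~\ref{nonIncreasability}) and the identification of $I_{-\infty}$ with maximal-cost leakage established earlier; one may instead take $\sup_{\alpha<0}$ of the finite-$\alpha$ bound before letting $\alpha\to-\infty$.

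Nothing here is delicate; the one place requiring care is the middle step, where one must check $\tfrac{\alpha-1}{\alpha}>0$ for $\alpha<0$ so that the data-processing bound on $I_\alpha$ is used in the direction that keeps ``$\ge$'', and where the bookkeeping choice of applying Corollary~\ref{sibsIalphaBoundLess1} to $(W,\hat W)$ rather than to $(W,X^n)$ is precisely what makes the essential infimum range over $\Pm_{\hat W}$ as the statement demands. The limiting step needs only mild integrability of $\ell$.
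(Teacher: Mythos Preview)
Your proposal is correct and follows essentially the same route as the paper: apply Corollary~\ref{sibsIalphaBoundLess1} with $f=\ell$ to the pair $(W,\hat W)$, then invoke the data-processing inequality (Property~\ref{DPI}) along the Markov chain $W-X^n-\hat W$ to replace $I_\alpha(W,\hat W)$ by $I_\alpha(W,X^n)$. The paper's proof is a one-line citation of exactly these ingredients; you have simply spelled out the sign check $\tfrac{\alpha-1}{\alpha}=\tfrac1\beta>0$ and the limiting argument, which the paper leaves implicit.
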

\begin{proof}
The result follows from Corollary \ref{sibsIalphaBoundLess1}, Property \ref{DPI} and the fact that $W-X^n-\hat{W}$ is a Markov Chain.
\end{proof}
Another result can be included that allows us to retrieve a lower-bound that is independent of the choice of the estimator $\phi$.
\begin{corollary}
Consider the Bayesian framework just described, the following must hold for every $\rho>0$, $\alpha<0$ and every $\hat{W}=\phi(X^n)$:
\begin{align}
 \mathbb{E}[\ell(W,\hat{W})] 
    \geq \rho \exp\left(-\frac1\beta I_\alpha(W,X^n)\right)\left(1-L_W(\rho))\right)^{\frac1\beta},
\end{align}
where $L_W(\rho)=\max_{\hat{w}}\Pm_W(\ell(W,\hat{w})\leq \rho)$, also known in the literature as ``small-ball probability''.
Moreover, taking the limit of $\alpha\to-\infty$ one recovers the following:
\begin{equation}
     \mathbb{E}[\ell(W,\hat{W})] 
    \geq \rho \exp\left(-\mathcal{L}^c(W\!\!\to \!\!X^n)\right)\left(1-L_W(\rho))\right).
\end{equation}
\end{corollary}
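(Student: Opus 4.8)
The plan is to obtain this bound from Corollary~\ref{sibsIalphaBoundLess1Prob}, applied to the pair $(W,\hat W)$, together with Markov's inequality and the data-processing inequality of Property~\ref{DPI}. First I would fix $\rho>0$ and introduce the event $E=\{(w,\hat w)\in\mathcal{W}\times\hat{\mathcal{W}}:\ell(w,\hat w)\ge\rho\}$. Since $\ell$ is non-negative, Markov's inequality gives $\mathbb{E}[\ell(W,\hat W)]\ge\rho\,\Pm_{W\hat W}(E)$, so the task reduces to lower-bounding $\Pm_{W\hat W}(E)$.

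Next I would look at the slices $E_{\hat w}=\{w:\ell(w,\hat w)\ge\rho\}$ and note that $\Pm_W(E_{\hat w})=1-\Pm_W(\ell(W,\hat w)<\rho)\ge 1-\Pm_W(\ell(W,\hat w)\le\rho)\ge 1-L_W(\rho)$, hence $\min_{\hat w}\Pm_W(E_{\hat w})\ge 1-L_W(\rho)$. Applying Corollary~\ref{sibsIalphaBoundLess1Prob} with $X=W$, $Y=\hat W$ and this event then yields $\Pm_{W\hat W}(E)\ge\big(\min_{\hat w}\Pm_W(E_{\hat w})\big)^{1/\beta}\exp\!\big(-\tfrac1\beta I_\alpha(W,\hat W)\big)$; since $1/\beta>0$ for $\alpha<0$, the map $x\mapsto x^{1/\beta}$ is increasing and the bound $\min_{\hat w}\Pm_W(E_{\hat w})\ge 1-L_W(\rho)$ can be inserted directly.

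Finally I would invoke the Markov chain $W-X^n-\hat W$: Property~\ref{DPI} gives $I_\alpha(W,\hat W)\le I_\alpha(W,X^n)$, and because the coefficient $-1/\beta$ is negative this substitution only decreases the exponential factor, so that $\mathbb{E}[\ell(W,\hat W)]\ge\rho\,\big(1-L_W(\rho)\big)^{1/\beta}\exp\!\big(-\tfrac1\beta I_\alpha(W,X^n)\big)$, which is the claimed inequality. For the limiting statement I would let $\alpha\to-\infty$, so that $1/\beta\to 1$ and, by the limiting characterization $\lim_{\alpha\to-\infty}I_\alpha=\mathcal{L}^c$ used in Property~\ref{mcLeakageProperty}, $\tfrac1\beta I_\alpha(W,X^n)\to\mathcal{L}^c(W\!\to\!X^n)$; passing to the limit recovers $\mathbb{E}[\ell(W,\hat W)]\ge\rho\,(1-L_W(\rho))\exp(-\mathcal{L}^c(W\!\to\!X^n))$.

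I do not expect a genuine obstacle: the argument is essentially a repackaging of Corollary~\ref{sibsIalphaBoundLess1Prob} specialized to the loss event $\{\ell\ge\rho\}$. The only points that need care are the bookkeeping of strict versus non-strict inequalities when passing to the small-ball probability $L_W(\rho)$, and confirming that $1/\beta>0$ so that none of the monotonicity substitutions reverse direction; the step from $\hat W$ to $X^n$ is precisely the data-processing reduction already used in the Bayesian-risk corollaries above.
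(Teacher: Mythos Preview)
Your proposal is correct and follows essentially the same route as the paper: Markov's inequality on $\{\ell\ge\rho\}$, then Corollary~\ref{sibsIalphaBoundLess1Prob} applied to the pair $(W,\hat W)$, the identification $\min_{\hat w}\Pm_W(E_{\hat w})\ge 1-L_W(\rho)$, and finally the data-processing step $I_\alpha(W,\hat W)\le I_\alpha(W,X^n)$ via the Markov chain $W-X^n-\hat W$. The paper's proof leaves the DPI reduction implicit here (it was spelled out in the preceding corollary), but you have made it explicit, which is a minor improvement in clarity rather than a difference in approach.
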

\begin{proof}
We start by observing that for every $\rho$, by Markov's inequality
\begin{equation}
    \mathbb{E}[\ell(W,\hat{W})] \geq \rho\left(\mathbb{P}(\ell(W,\hat{W})\geq \rho)\right). \label{markov}
\end{equation}
The statement then follows from further lower-bounding $\mathbb{P}(\ell(W,\hat{W})\geq \rho)$ with Corollary \ref{sibsIalphaBoundLess1Prob} and noticing that $\min_{\hat{w}}\Pm_W(\ell(W,\hat{w})\geq \rho) = \min_{\hat{w}}(1-\Pm_W(\ell(W,\hat{w})\leq \rho))= 1-L_W(\rho).$
\end{proof}
\section{Conclusion}
In this work we extended the definition of Sibson's $\alpha$-Mutual Information to negative orders. In order to have a properly defined object and to be consistent with the axiomatic properties that an information measure should satisfy (according to R\'enyi \cite{renyiMeasuresInfo}), we slightly adapted the original definition from \eqref{sibsIAlpha} for $
\alpha>0$ to \eqref{definition} for $
\alpha<0$. We presented a sequence of properties that these objects satisfy and we connected the information-measures to a family of functional inequalities (similarly to $I_\alpha$ with $\alpha>1$ \cite[Thm 4, Remark 7, Cor 1]{fullVersionGeneralization}). This family of inequalities exists, although not always with the same inequality sign, for every $\alpha\in\mathbb{R}$ as shown in Table \ref{comparison}. When $\alpha>1$ and through these inequalities, we were able to connect the information measure to bounds on the probability of having a large generalisation error \cite{fullVersionGeneralization}, hypothesis testing problems \cite{conditionalSibsMI} and to lower-bounds on the Bayesian Risk in Estimation Procedures \cite{bayesRiskIalpha}. Here, as an example of application, we considered once again the Bayesian Risk setting and showed how this family of information measures can be employed in such a framework even if $\alpha<1$. 

\section*{Acknowledgment} The work in this paper was supported in part by the Swiss National Science Foundation under Grant 200364.
\bibliographystyle{IEEEtran}
\bibliography{sample}

\end{document}